\newtheorem{remark}{Remark}
\newlength\fwidth
\pgfplotsset{compat=newest} 
\pgfplotsset{plot coordinates/math parser=false}
\pgfplotsset{compat = 1.3}
\newtheorem{proposition}[]{Proposition}[]
\newtheorem{lemma}[]{Lemma}[]
\providecommand{\norm}[1]{\lVert#1\rVert}
\providecommand{\abs}[1]{\lvert#1\rvert}
\providecommand{\rhomax}[0]{\rho_{crit}}
\providecommand{\rhocrit}[0]{\rho_{crit}}
\providecommand{\einhalb}[0]{\frac{1}{2}}
\providecommand{\indicatormaxdens}[0]{\, \mathbbm{1}_{ \left( \rho(x)   > \rhocrit \right)}}
\providecommand{\dt}[0]{\Delta t}
\providecommand{\dx}[0]{\Delta x}
\begin{document}
	
	\title{Density dependent diffusion models for the interaction of particle ensembles with boundaries}
	
	\author{Jennifer Weissen\footnotemark[1], Simone G\"ottlich\footnote{University of Mannheim, Department of Mathematics, 68131 Mannheim, Germany (jennifer.weissen@uni-mannheim.de, goettlich@uni-mannheim.de)}, Dieter Armbruster\footnote{Arizona State University, School of Mathematical and Statistical Sciences, Tempe, AZ 85287-1804, USA (dieter@asu.edu)}}

	\maketitle

	\begin{abstract}

		The transition from a microscopic model for the movement of many
		particles to a macroscopic continuum model for a density flow is studied. The microscopic model for the free flow is completely deterministic, described by an interaction potential that leads to a coherent motion where all particles move in the same direction with the same speed
		known as a flock. Interaction of the flock with boundaries, obstacles
		and other flocks leads to a temporary destruction of the coherent motion that macroscopically can be modeled through density dependent
		diffusion. The resulting macroscopic model is an advection-diffusion
		equation for the particle density whose diffusion coefficient is density
		dependent. Examples describing i) the interaction of material flow on
		a conveyor belt with an obstacle that redirects or restricts the material
		flow and ii) the interaction of flocks (of fish or birds) with boundaries
		and iii) the scattering of two flocks as they bounce off each other are
		discussed. In each case, the advection-diffusion equation is strictly hyperbolic before and after the interaction while the interaction phase is
		described by a parabolic equation. A numerical algorithm to solve the
		advection-diffusion equation through the transition is presented.
		
	\end{abstract}

{\bf AMS Classification.} 35M10, 35K65, 35L65

{\bf Keywords.} Interacting particle systems, mean field limit, advection-diffusion equation, numerical simulations, boundary interactions, material flow, swarming. 
	
	
	\section{Introduction} \label{sec:intro}

	We study the transition from microscopic models of interacting particles to the  macroscopic limit describing their motion as 
	coherent ensembles. Such problems naturally arise in the description of biological swarms such as flocks of birds~\cite{ArmMarTha2017,ChuDorMar2007,DorChuBer2006}, schools of fish~\cite{Aoki1982}, ant~\cite{BoiCapMor2000} or bacterial colonies~\cite{KocWhi1998}, the movement of pedestrian crowds~\cite{GoeKnaSch2018,HelMol1995} or transport of material ~\cite{GoeHohSch2014,RosWeiGoa2019}. 
	In some production facilities, e.g. bottling plants,  work in process (i.e. bottles) is  transported on conveyor belts from one processing station to the next. Bottles are positioned in an initial configuration on a conveyor belt.   This  initial configuration remains intact 
	and is transported with constant speed as long as individual objects do not collide. On the biological level, social animals gather together and move collectively, often in  synchronized and  coherent patterns~\cite{Selforganization_biologicalsystems}.
	Thus  individuals organize into swarms and build formations that remain stable over time. When such swarms move with constant velocity and form  a  well-defined translational steady state where all individuals head in a common direction they 
	are called a flock. While each individual has their own initial velocity the interactions between individuals lead to a  stable formation where, in the absence of external perturbations, all individuals have exactly the same velocity. In that sense, 
	the movement of a flock acts like a conveyor belt.  In this paper, we model  situations where the stable formation is perturbed or destroyed by external interactions with boundaries and the movement of the particle ensemble is reorganized.
	
	On the microscopic level we consider the motion of  $N$ individuals moving according to Newton's law. On a conveyor belt the  relative motion of single objects is determined by frictional forces which  emerge due to contact  with the conveyor belt, its geometric restrictions and collisions with other objects~\cite{GoeHohSch2014}. The resulting system of ordinary differential equations describes  transport of 
	all parts on the moving belt. For biological swarms, the individual motion is typically governed by a velocity selection mechanism and attractive and repulsive interactions between individuals leading to similar systems of coupled ordinary differential equations. Among the microscopic models for swarming, the Vicsek model~\cite{VicCziBen1995} and the attraction-repulsion model~\cite{DorChuBer2006} have received considerable attention in the literature. 
	
	There have been a few studies at the microscopic level to describe the phenomenology of the interaction of flocks with geometrical boundaries. Studies in finite domains of the Vicsek model, see~\cite{ArmMotTha2017} and references therein, and the attraction-repulsion model~\cite{ArmMarTha2017} illustrate how the geometry of the domain influences steady state flocking solutions. In particular,  boundaries generate internal excitations in the swarm which causes the flocking solution to break apart.  Depending on the geometry of the domain the flock 
	reorganizes with a different direction, similar to particle scattering. In the Vicsek model, the noise level and the influence horizon impact the formation of the collective. In the attraction-repulsion model  the relative strength of attraction and repulsion in comparison to the self-propelling forces determine whether flocks scatter elastically or inelastically from a boundary. Similarly when two or more flocks~\cite{ArmMarTha2017} collide they may  cross almost without interacting or their formation may be temporarily  destroyed and 
	they  re-emerge as two flocks with different directions of motions or they may merge into one flock. 
	
	The prototype for micro-macro transitions is kinetic gas theory developed by Boltzmann in the $19^{th}$ century which linked the macroscopic measurable quantities of heat and temperature to microscopic particle motion. One principal advantage of macroscopic models is that they are computationally independent from the number of individuals. Microscopic models are computationally expensive for large numbers of individuals that often appear in nature where animal societies might contain thousands or millions of individuals. Another reason to use macroscopic models is that macroscopic solutions like flocks can be observed within microscopic simulations and the emergent properties of their motion can be studied and described well at a macroscopic level. 
	
	Hence continuum models are developed for the limit as the number of particles goes to infinity.  
	Being aware of the fact that
	this limit is a much better approximation for gases than it is for birds or bottles we derive macroscopic models for self-organized flow via a mean field limit from the underlying microscopic models. We are especially interested in the description of stable motions  which are disturbed  when interacting with boundaries or obstacles. The boundary interaction leads to internal perturbation which is modeled by diffusion on the macroscopic scale. 
	
	The general type of the  macroscopic equation that we will derive is the (strongly degenerate) advection-diffusion equation for the particle density $\rho = \rho(x,t)$
	\begin{align}
		\partial_t \rho + \nabla_x \cdot (f(\rho,x) - k(\rho) \nabla \rho) = 0 \qquad (x,t) \in \mathbb{R}^2 \times (0,T), \label{eq:advectiondiffusion_unbounded}
	\end{align}
	where diffusion is generally coming from random motion in the underlying microscopic model.
	Under several assumptions, among them the requirements that $k(\rho)$ is sufficiently smooth, Volpert and Hudjaev~\cite{VolHud1969} showed existence of a BV entropy solution for~\eqref{eq:advectiondiffusion_unbounded} in several space dimensions and unbounded domains. Uniqueness of weak solutions in the class of bounded integrable functions for the purely parabolic case and  nondecreasing $k(\rho)$ has been proven by Brezis and Crandall~\cite{BreCra1979}. Yin~\cite{Jin1990} showed uniqueness of weak solutions in $L^\infty \cap BV$ for the Cauchy Problem of~\eqref{eq:advectiondiffusion_unbounded} and strictly increasing $k(\rho)$. Carillo~\cite{Carrillo1999} showed uniqueness of entropy solutions for particular boundary value problems with Dirichlet boundary conditions.
	
	In our case, the hyperbolic part of equation~\eqref{eq:advectiondiffusion_unbounded} allows for the correct modeling of material transport  and flocks. In the undisturbed situation our model is diffusion free, i.e. $k(\rho)=0$. Whenever the bulk of material or the flock is disturbed, diffusion becomes active and is generated from the ergodic properties of the large number of particle interactions.  These disturbances are not constant in time but instead depend on time and space. 
	
	The paper is organized as follows: In Section~\ref{sec:derivationmacrolimit}, we present the microscopic model and give a short overview on existing macroscopic limits before we formally derive our macroscopic equation. Starting from the macroscopic equation for material transport in Section~\ref{sec:application_materialflow}, we generalize the concept to model biological swarms in Section~\ref{sec:application_swarming}. Section~\ref{sec:numericalresults} introduces the operator splitting method to compute approximate solutions to the advection-diffusion equation. We study the properties of the material flow model numerically in Section~\ref{sec:numerics_materialflow} and compare our results to a non-local macroscopic model as well as to experimental data. 
	Section~\ref{sec:numerics_swarming} discusses 
	numerical results for the movement of flocks in bounded domains and  scattering interactions of two swarms. 
	Section~\ref{sec:conclusion} summarizes our results. 
	
	
	\section{Derivation of the macroscopic limit} \label{sec:derivationmacrolimit}
	
	\subsection{Discrete model and macroscopic treatment}
	We consider the deterministic, second order microscopic model describing the movement of particles driven by a velocity selection mechanism and a pairwise interaction force \cite{ChuDorMar2007}.  The equations of motion for 
	$N$ particles  are
	\begin{align} 
		\begin{split}
			\label{eq:generalmicromodel}
			\frac{dx_i}{dt} &= v_i \\
			m \frac{dv_i}{dt} &= G(v_i) + \sum_{j \neq i } F(x_i-x_j),
		\end{split} \qquad i=1, \dots, N
	\end{align}
	where $x_i, v_i \in \mathbb{R}^2$ are the position and velocity of the particle $i$ and $m$ is the mass. $G(v)$ is the velocity selection mechanism and $F(x)$ is the interaction force depending on the positions and distance of particles.
	
	To describe material flow on conveyor belts, the interaction of individual particles in the force term $F$ is reduced to short-range repulsion when two particles collide. First order equations derived from~\eqref{eq:generalmicromodel} have been considered in~\cite{GoeKlaTiw2015} where a non-local second order model is derived from the microscopic model~\eqref{eq:generalmicromodel} via mean field limit and a second order macroscopic model which couples the continuity equation
	\begin{equation} \label{eq:continuityequation}
		\partial_t \rho + \nabla_x \cdot (\rho v) = 0,
	\end{equation}
	with a momentum equation. The continuity equation is closed with a closure velocity derived from the momentum equation. The resulting first order model is the continuity equation with non-local velocity $v(\rho,x,t)$.
	Alternatively, the (non-degenerate) advection-diffusion equation~\eqref{eq:advectiondiffusion_unbounded} is proposed where the density is conserved while particles travel with the constant average speed of the conveyor belt $f(\rho,x) = v_T \rho$ and their movements are subject to diffusion with strength $k(\rho, x,t) = C \rho $ with $C >0$, see~\cite{GoeKlaTiw2015}.
	
	In connection with animal swarming, the focus has been mainly on models in which the density of the population $\rho$ satisfies the advection-diffusion equation~\eqref{eq:advectiondiffusion_unbounded} where $f(\rho,x) = v \rho$. The velocity $v=v(\rho,x,t)$ is a non-local velocity and $k(\rho,x,t)$ is the diffusion coefficient~\cite{MogEde1999,TopBer2004}. The non-locality models spatially decaying social forces including attraction and repulsion between the individuals and is based on the fact that interactions between individuals via sight, sound or smell often take place at a larger distance~\cite{TopBerLew2006}. Diffusion in the continuum limit leads to disordering and dispersal within the swarm. Density independent diffusion leads to disintegration of swarms on large time scales, while density dependent diffusion can stabilize swarms~\cite{MogEde1999}. In comparison to non-locality in the advection term, non-local effects in the diffusion term do not lead to qualitatively new patterns in the movement of animals~\cite{MogEde1999}. 
	
	The usual approach to derive first order continuum equations from first order microscopic models involves a Fokker-Planck approximation~\cite{OkuLev2001}. The seminal derivation of the advection-diffusion-terms from stochastic microscopic models was considered in~\cite{Alt1980,Gru1994,OthDunAlt1988}. The microscopic models describe motion of individual cells or organisms subject to random jumps or turns modeled by (biased) random walks~\cite{Alt1980, OthDunAlt1988}. In addition, cells sense and are influenced by  the number of neighbours which  is assumed to be distributed with Poisson probability~\cite{Gru1994}. The main assumption is that individual movements in the microscopic model include sufficiently large random motion. Then, individual-based stochastic simulations agree well with the behaviour described by the limit equation. 
	
	Microscopic second order models for swarming are connected to their macroscopic counterpart using kinetic theory as a middle stage. The kinetic equation for the single particle probability distribution function is derived from the particle scale and then related to the macroscopic limit equation with additional assumptions~\cite{DorPanCar2008,AceBosCar2019,ChuDorMar2007,DegMot2008}. The book~\cite{NalParTos2010} (Part III, Section 2-4) contains a comprehensive summary on microscopic swarming models and their second order continuum limits. The momentum equations are non-local equations which depict interactions over a broader range of space. For a general overview on microscopic and macroscopic models for swarming, we especially refer the reader to the papers~\cite{ChuDorMar2007,TopBerLew2006} and references therein. 
	
	To our knowledge, first order models derived from second order microscopic models do not exist in the literature in the context of swarming.  
	Thus, in the next section, we follow the approach of~\cite{GoeKlaTiw2015} to motivate a first order macroscopic limit from the second order macroscopic model~\eqref{eq:generalmicromodel}. 
	Specifically, the   full macroscopic equations which couple the continuity equation ~\eqref{eq:continuityequation} to  a non-local momentum equation are reduced to a local first order limit equation through the identification of  a local closure velocity.
	\subsection{Mean field limit and macroscopic equation}
	Starting from the microscopic model~\eqref{eq:generalmicromodel}, we derive a macroscopic limit equation via mean field limit as middle stage. Initially, we consider $N_0$ ball shaped particles with radius $R_0$ and mass $m_0$. We let $N \rightarrow \infty, R \rightarrow 0$, where $N$ is the number of individuals and $R$ is the radius. We rescale and keep the total mass and the total surface covered by the particles constant
	\begin{equation*}
		N m = N_0 m_0,  \qquad  N \pi R^2 = N_0 \pi R_0^2. 
	\end{equation*} 
	
	Let $f^{(N)}(x,v,t) = \frac{m_0 N_0}{N} \sum_{i=1}^{N} \delta_{x_i(t) \times v_i(t)}$, then 
	
	\begin{equation*}
		\int \rho(x,t) dx = \int \int f^{(N)}(x,v,t)\; dv dx = m_0 N_0.
	\end{equation*}

	The corresponding mean field equation is 
	\begin{align*}
		\partial_t f &+ v \cdot \nabla_x f + S_f = 0, \\
		S_f &= \nabla_v \cdot \left( \frac{1}{m}  \left(G(v) + \int \int F(x-y) f(y,w,t) dw dy \right) f(x,v,t) \right), 
	\end{align*}
	where $F(0) = 0$. Using a mono-kinetic closure, we arrive at the macroscopic limit equation
	\begin{align}
		\partial_t \rho + \nabla \cdot (\rho u) &= 0  \label{eq:macro2ndorder_1}\\
		G(u) + \int F(x-y) \rho(y) dy &= 0, \label{eq:macro2ndorder_2}
	\end{align}
	where we have left out the time and space dependency of $\rho,u$ whenever the meaning is clear, see~\cite{GoeKlaTiw2015} for further details.
	Apparently, the equations~\eqref{eq:macro2ndorder_1} and~\eqref{eq:macro2ndorder_2} are coupled via the velocity $u$.
	To derive a closed model consisting of a single equation only for the density in~\eqref{eq:macro2ndorder_1}, an explicit closure relation for the velocity $u$ is needed. However, the velocity is only implicitly given by $G(u)$ in~\eqref{eq:macro2ndorder_2}. So the key idea in the following section is to determine the velocity $u$ depending on the force term $F$, i.e. $u=G^{-1}(\int F(x-y)\rho(y)dy)$. As we will see, the choice of the interaction potential leads to different types of advection-diffusion equations for~\eqref{eq:macro2ndorder_1}.

	\subsubsection{Macroscopic limit for interaction potentials with compact support}
	
	We are especially interested in interaction forces of the form 
	\begin{align} \label{eq:interactionforce}
		F(x) = H(d_R - \norm{x}) F_R(x),
	\end{align}
	where $H$  is the Heaviside function and $F_R$ is the gradient field of a potential $U_R$, i.e., $F_R = \nabla U_R$. Note that $F_R$ is odd, i.e. $F_R(-x) = -F_R(x)$. The interaction force~\eqref{eq:interactionforce} between two particles is only active up to the distance $d_R$. We assume that the distance $d_R$ can be expressed depending on the radius of the particle, i.e., a constant ratio $R/d_R$. The distance $d_R$ is the horizon up  to which a single particle can sense others. 
	
	Consider the force $F(x_i -x_j)$ acting on particle $i$ induced by the particle $j$.  On the microscopic level, particles are described by their center of mass. If their centers of mass are in close proximity  they have a repulsive impact on each other such that the particles experience a force  which pushes them apart. Usually particle $i$ is pushed in the direction  $-(x_i-x_j)$ opposite to particle $j$. On the macroscopic level, the density distribution describes the spatial concentration of the mass. Thus there are no particles  as a density distribution represents the collection of infinitely many infinitely small particles with zero distance and hence  the distance between particles is not defined. 
	
	Since macroscopically the integral over the density is the mass, repulsion should only be active, if  microscopic particles overlap, indicated by a cumulated density that  is too high. Therefore, we switch from scaling the force in terms of the distance microscopically to scaling it macroscopically in terms of the mass. 
	
	Without any changes, we rewrite the microscopic force term~\eqref{eq:interactionforce} as follows
	\begin{align*}
		F(x_i-x_j)&=  H(d_R -\norm{x_i-x_j}) F_R(x_i-x_j) \, \mathbbm{1}_{ \left(H(d_R - \norm{x_i-x_j}) = 1 \right) } \\
		&=H(d_R -\norm{x_i-x_j}) F_R(x_i-x_j) \, \mathbbm{1}_{ \left( \sum_j m H(d_R - \norm{x_i-x_j}) > m \right)}.
	\end{align*}
	
	We interpret the expressions $m$ and $\sum_j m H(d_R - \norm{x_i-x_j})$ in the additional indicator function macroscopically and use a Taylor expansion to reformulate 
	\begin{align*}
		m &\sim \frac{m_0N_0}{N} =\frac{m_0 R^2}{R_0^2}, \\
		\sum_j m H(d_R - \norm{x_i-x_j}) &\sim \int_{B_{d_R(x)}} \rho(y) dy \approx \int_{B_{d_R(x)}} \rho(x) +\nabla \rho(x) \cdot (x-y) dy \\  
		&= \pi (d_R)^2 \rho(x) + \nabla \rho(x) \cdot \underbrace{ \int_{B_{d_R}(0)} \tilde{z} \, d \tilde{z}}_{=0}. 
	\end{align*}
	This way, we derive an expression for the density threshold $\rhocrit$ above which diffusion is observed
	\begin{align}
		\left( \sum_j m H(d_R - \norm{x_i-x_j}) > m \right) &\sim \rho(x) > \frac{m_0 R^2}{\pi (d_R)^2 R_0^2 } =: \rhocrit,  \label{eq:derivation_rhomax}
	\end{align}
	which is meaningful even for $R \rightarrow 0$, as the ratio $R/d_R$ is fixed by assumption. Then, we plug in the expression for the force term into equation~\eqref{eq:macro2ndorder_2}, exploit that $F_R$ is odd and use again Taylor expansion to obtain
	
	\begin{align*}
		\int F(x-y) \rho(y) dy &= \int_{B_{d_R(x)}} F_R(x-y) \rho(y) \, \mathbbm{1}_{ \left( \int_{B_{d_R(x)}} \rho(y) dy  > \frac{m_0 N_0}{N} \right)} \, dy \\
		& =\int_{B_{d_R(x)}} F_R(x-y) \rho(y) \,\indicatormaxdens \, dy \\
		&= \int_{B_{d_R(0)}} F_R(-z) \rho(x+z) \, dz \, H(\rho(x) - \rhocrit) \\
		&\approx - \int_{B_{d_R(0)}} F_R(z) \langle \nabla  \rho(x), z \rangle \, dz  \, H(\rho(x) - \rhocrit).
	\end{align*}

	We are interested in the limit of the force term for $R \rightarrow 0$. Therefore, we assume that the interaction force is chosen such that the limit can be reformulated as follows
	\begin{align}
		\begin{split}
			&\lim_{R \rightarrow 0} \int_{B_{d_R(0)}}  F_R(z) \langle \nabla \rho(x), z \rangle \, dz = \overline{C}  \nabla \rho(x)  , \text{ where }  \overline{C}< \infty. \label{eq:requirementforceterm_limitedsupport}
		\end{split}
	\end{align}
	Then, the local approximation of the force is denoted by
	\begin{align*}
		\Psi_{d_R}(\rho, \nabla \rho) &:= \lim_{R \rightarrow 0} \int_{B_{d_R(0)}} F_R(z) \langle \nabla  \rho(x), z \rangle  \, dz  \, H(\rho(x) - \rhomax) \\
		&= \overline{C} \nabla \rho(x)  \, H(\rho(x) - \rhomax).
	\end{align*}
	If the self-propelling force $G$ is invertible, we can solve~\eqref{eq:macro2ndorder_2} for the velocity $u = G^{-1}(\Psi_{d_R}( \rho, \nabla \rho))$ and substitute $u$ in~\eqref{eq:macro2ndorder_1}, such that we obtain the first order macroscopic limit equation
	\begin{align} \label{eq:general_limitequation}
		\partial_t \rho + \nabla_x \cdot \left(\rho G^{-1}(\Psi_{d_R}( \rho, \nabla \rho ))\right) = 0.
	\end{align}

	\begin{remark}
		In the case of interaction potentials with unlimited support we achieve a similar result.
		Let us consider interaction forces $F(x) = F_\infty(x)$ with support $\text{supp}(F) = \mathbb{R}^2$ and instead of fixing the ratio $R/d_R$, we assume $d_R = \infty$ in equation~\eqref{eq:interactionforce}. The resulting critical density where diffusion becomes active is $\rhocrit =0$. Defining
		\begin{align}\label{eq:requirementforceterm_unlimitedsupport}
			\Psi_\infty(\rho, \nabla \rho) ~& = \lim_{R \rightarrow 0} \int_{\mathbb{R}^2} F_{\infty}(z) \langle \nabla  \rho(x), z \rangle  \, dz  = \bar{C} \nabla \rho(x) < \infty,
		\end{align}
		leads to the limit equation 
		\begin{align} \label{eq:general_limitequation_infinteinteraction}
			\partial_t \rho + \nabla_x \cdot \left(\rho G^{-1}(\Psi_\infty( \rho, \nabla \rho ))\right) = 0.
		\end{align}
	\end{remark}
	
	In the following section, we present some exemplary particle systems and interaction forces to illustrate the two types of limit equations, i.e. the degenerate advection-diffusion equation~\eqref{eq:general_limitequation} or the non-degenerate advection-diffusion equation~\eqref{eq:general_limitequation_infinteinteraction}.

	
	\subsection{Applications} \label{sec:applications}
	\subsubsection{Material Flow} \label{sec:application_materialflow}
	
	The microscopic model for material flow on a conveyor belt $\Omega \subset \mathbb{R}^2$ describes the transport of  identical and homogeneous parts with mass $m_0$ and radius $R_0$ with velocity $v_T \in \mathbb{R}^2$, see~\cite{GoeHohSch2014}. 
	The regularized bottom friction 
	\begin{align} \label{eq:bottomfriction}
		G(v)  = - \gamma_b (v - v_T),
	\end{align}
	corrects deviations of the parts' velocities from the conveyor belt velocity where $\gamma_b$ is the bottom viscous damping.
	
	The interaction force $F$ is given by a spring-damper model of the form
	\begin{align} \label{eq:interactionforce_materialflow}
		F(x) = H(2R_0- \norm{x}) F_{2R_0}(x).
	\end{align}
	
	\begin{proposition} \label{prop:continuumlimit_materialflow}
		For the material flow model~\eqref{eq:generalmicromodel} with bottom friction~\eqref{eq:bottomfriction} and interaction force~\eqref{eq:interactionforce_materialflow} obeying~\eqref{eq:requirementforceterm_limitedsupport}, the macroscopic limit given by equation~\eqref{eq:general_limitequation} is the degenerate advection-diffusion equation
		\begin{align} 
			\begin{split}
				\label{eq:continuumlimit_materialflow_csupport}
				\partial_t \rho + \nabla_x \cdot \left(\rho  v_T -  k(\rho) \nabla \rho \right) = 0, 
			\end{split}
		\end{align}
		with threshold density $ \rhocrit = \frac{m_0 }{\pi 4 R_0^2 }$ and $k(\rho) = \frac{\bar{C}}{\gamma_b} \rho H(\rho- \rhomax)$.
	\end{proposition}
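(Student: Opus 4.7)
The plan is to derive the claimed equation by direct substitution into the general limit equation~\eqref{eq:general_limitequation}, using the particular form of the self-propelling force $G$ and the particular interaction kernel~\eqref{eq:interactionforce_materialflow}, and then to read off the values of $\rhocrit$ and $k(\rho)$ from the expressions obtained in the derivation preceding the proposition.

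First I would invert the bottom friction. Since $G(v)=-\gamma_b(v-v_T)$ is affine with nonzero slope, it is globally invertible with
\[
G^{-1}(y) = v_T - \frac{y}{\gamma_b}.
\]
Next I would specialize the local force approximation $\Psi_{d_R}$. The interaction~\eqref{eq:interactionforce_materialflow} has the form of~\eqref{eq:interactionforce} with horizon $d_R = 2R$, and by hypothesis satisfies~\eqref{eq:requirementforceterm_limitedsupport}, so
\[
\Psi_{d_R}(\rho,\nabla\rho) \;=\; \overline{C}\,\nabla\rho(x)\,H(\rho(x)-\rhocrit).
\]
Substituting into~\eqref{eq:general_limitequation} yields
\[
\partial_t \rho + \nabla_x \cdot\!\left( \rho v_T - \frac{\overline{C}}{\gamma_b}\rho\,H(\rho-\rhocrit)\,\nabla\rho \right) = 0,
\]
which is exactly~\eqref{eq:continuumlimit_materialflow_csupport} with $k(\rho) = \tfrac{\overline{C}}{\gamma_b}\rho\,H(\rho-\rhocrit)$.

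It remains to identify the threshold density. From the general formula~\eqref{eq:derivation_rhomax} one has $\rhocrit = m_0 R^2/(\pi d_R^2 R_0^2)$. Here $d_R = 2R$ (particles are balls of radius $R$ whose centers must be within $2R$ to interact), so the factor $R^2$ cancels against $d_R^2 = 4R^2$ and we obtain
\[
\rhocrit \;=\; \frac{m_0 R^2}{\pi\,(2R)^2\,R_0^2} \;=\; \frac{m_0}{4\pi R_0^2},
\]
independent of $R$, which is essential since we pass to the limit $R\to 0$. This matches the stated value.

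I do not see a genuine obstacle: once the hypotheses~\eqref{eq:interactionforce_materialflow} and~\eqref{eq:requirementforceterm_limitedsupport} are in place, the proof is a one-line substitution plus the cancellation $R^2/d_R^2 = 1/4$. The only point that deserves care is to remark explicitly that the scaling $d_R = 2R$ keeps the ratio $R/d_R$ fixed (as required by the derivation above), so that both the limit~\eqref{eq:requirementforceterm_limitedsupport} defining $\overline{C}$ and the value of $\rhocrit$ make sense as $R\to 0$; with that noted, the computation of $k(\rho)$ and $\rhocrit$ is immediate.
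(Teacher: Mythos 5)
Your proposal is correct and follows essentially the same route as the paper: invert the affine bottom friction to get $G^{-1}(y)=v_T-y/\gamma_b$, insert $\Psi_{d_R}(\rho,\nabla\rho)=\overline{C}\,\nabla\rho\,H(\rho-\rhocrit)$ into~\eqref{eq:general_limitequation}, and read off $\rhocrit$ from~\eqref{eq:derivation_rhomax} with $d_R=2R$. The only difference is that the paper spends most of its proof verifying hypothesis~\eqref{eq:requirementforceterm_limitedsupport} for a concrete spring-type force $F_R(z)=k_m\frac{z}{\norm{z}}\frac{(2R-\norm{z})}{R^4}$, obtaining $\overline{C}=\frac{8}{3}k_m\pi$ and thereby tying the macroscopic diffusion constant to the microscopic spring stiffness, whereas you take that hypothesis as given — which the proposition's wording permits.
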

	\begin{proof}
		The interaction force $F$ has to satisfy~\eqref{eq:requirementforceterm_limitedsupport}. Without loss of generality, we carry out the analysis for $F_{R}(z) = k_m \frac{z}{\norm{z}} \frac{(2R - \norm{z})}{R^4}, k_m >0$. It holds that
		\begin{align*}
			&\lim_{R \rightarrow 0} \int_{B_{2_R(0)}}  F_R(z) \langle \nabla \rho(x), z \rangle \, dz \\
			&= \lim_{R \rightarrow 0} \int_{B_{2R}(0)} k_m \frac{z}{\norm{z}}\frac{(2R-\norm{z})}{R^4} (\partial_{x^{(1)}} \rho z^{(1)} + \partial_{x^{(2)}} \rho z^{(2)}) \, dz \\ 
			&= \lim_{R \rightarrow 0}  k_m \pi \nabla \rho  \int_0^{2R} r^2 \frac{(2R - r)}{R^4} dr  =\frac{8}{3} k_m \pi \nabla \rho =\bar{C} \nabla \rho,
		\end{align*}
		where we set the macroscopic diffusion constant $\bar{C}$ equal to the microscopic term $\frac{8}{3} k_m \pi$. 
		Thus the scale of the interaction force $k_m$ determines the  strength of the diffusion coefficient $\bar{C}$. We obtain
		\begin{align*}
			\Psi_{2R}(\rho, \nabla \rho) = \bar{C} \nabla \rho H(\rho - \rhocrit).
		\end{align*}
		Using $G^{-1}(y) = v_T - \frac{y}{\gamma_b}$, the velocity $u$ is then given by
		\begin{align*}
			u = v_T - \frac{ \bar{C} \nabla \rho H(\rho - \rhocrit)}{\gamma_b}.
		\end{align*}
	\end{proof}
	The Kirchhoff transformation of $k(\rho)$ given by $
	b(\rho) = \int_{0}^{\rho} k(y) dy, 
	\nabla b = k(\rho) \nabla \rho $
	allows to recast~\eqref{eq:continuumlimit_materialflow_csupport}
	\begin{align}
		\partial_t \rho + \nabla \cdot (f(\rho,x) - \nabla b(\rho)) &= 0 \qquad \qquad x \in \Omega, \label{eq:advectiondiffusion_linear} 
	\end{align}
	which we consider in the following in a bounded domain $\Omega \subset \mathbb{R}^2$ equipped with boundary and initial conditions
	\begin{align}
		(f(\rho,x) - \nabla b(\rho)) \cdot \vec{n} &= 0 \qquad \qquad x \in \partial \Omega \label{eq:boundarycondition}, \\
		\rho(x,0) = \rho_0(x), \label{eq:initialcondition}
	\end{align}
	where $\vec{n}$ is the outer normal vector at the boundary $\partial \Omega$. The limit equation is the degenerate advection-diffusion equation with convective flux $f(\rho,x) = \rho v_T$ and diffusive flux $\nabla b = k(\rho) \nabla \rho $. Equation~\eqref{eq:continuumlimit_materialflow_csupport} is formally parabolic, but is purely hyperbolic when $k(\rho) \geq 0$ vanishes. Since $k'(\rho) \geq 0$ is zero on a set of positive measures, the equation is strongly degenerate. In the purely hyperbolic case, i.e. $k(\rho) = 0$ the equation reduces to the transport equation describing transport at the velocity $v_T$ of the conveyor belt
	$$ \partial_t \rho + \nabla_x \cdot (\rho v_T) = 0.$$
	If the conveyor belt is stopped (i.e. $v_T = 0$), we obtain the purely parabolic case
	$$\partial_t \rho  = \Delta b(\rho),$$
	where material spreads out from regions which are densely packed, i.e. regions with density values above the density value $\rhocrit$. 
	
	The density $\rhocrit$ matches the microscopic situation. The  density in~\eqref{eq:continuumlimit_materialflow_csupport} corresponds to the concentration of mass of a single part with total mass $m_0$ in the circle with radius $R_0$, i.e, a maximum density $\rhocrit$ at which parts are packed as closely as possible without overlaps. This definition of the maximum density is derived from ball shaped particles. 
	With initial data $\norm{\rho^0}_{L^\infty} < \rhocrit$ in unbounded domains, the equation describes a trivial transport process without diffusive influence.
	
	The most interesting situation occurs when the  compactly supported interaction force leads to the distinction between  $\rho < \rhocrit$ and $\rho \geq \rhocrit$.  In particular the mixed hyperbolic-parabolic equation~\eqref{eq:continuumlimit_materialflow_csupport} is purely hyperbolic for $\rho < \rhocrit$ and contains a  parabolic region  if $\rho > \rhocrit$. 
	
	The diffusion in equation~\eqref{eq:continuumlimit_materialflow_csupport} is only activated when the transport of the material is disturbed and the material density increases above the maximum density. The diffusion models dispersal, is spatially local and acts in the opposite direction of the gradient. Once diffusion is active, it is linear in both $\rho$ and $\nabla \rho$. The ratio between  $v_T$ and $\frac{\overline{C}}{\gamma_b}$ determines whether the problem is advection or diffusion dominated. As the  viscous damping $\gamma_b$ increases, diffusion decreases and the transport process becomes advection dominated.  
	
	A typical experiment involves the placement of bottles on a conveyor belt. Without obstacles, the conveyor belt moves them with a constant velocity and no relative velocity between them hence the initial configuration is preserved. The spatial translation $\rho(x,t) = \rho^0(x-v_T t)$ is accurately represented by the hyperbolic part of~\eqref{eq:continuumlimit_materialflow_csupport} and sharp material gradients are maintained. As soon as an obstacle
	is placed on the belt, the transportation process is disturbed. As a consequence,  the initial configuration of the bottles is destroyed. They are pressed closer together and when the maximum density in the macroscopic model is exceeded, as a result of the diffusion, the velocity of the bottles will deviate from the transport velocity $v_T$. 
	
	Generally in advection-diffusion equations the Péclet number
	measures the relative strength of advection and diffusion.
	The Péclet number vanishes for pure diffusion and is infinite in case of pure advection. 
	We remark that the Péclet number for the degenerate advection-diffusion equation~\eqref{eq:continuumlimit_materialflow_csupport} is defined locally and strongly varies with the density. For regions where the undisturbed transport process takes place, the Péclet number is infinite, while for regions with high density values and strong diffusion i.e. near disturbances,  very small Péclet numbers can be observed. 
	
	The following Lemma shows that for interaction potentials with unlimited support and interaction potentials chosen such that~\eqref{eq:requirementforceterm_unlimitedsupport} holds true, the limit equation is given by the advection-diffusion equation where the diffusion term is always active for $\rho >0$.

	\begin{lemma}\label{lem:continuumlimit_materialflow_unlimitedsupport}
		For the material flow model~\eqref{eq:generalmicromodel} with bottom friction~\eqref{eq:bottomfriction} and interaction force~\eqref{eq:interactionforce} with $d_R =\infty$ obeying~\eqref{eq:requirementforceterm_unlimitedsupport}, the macroscopic limit given by equation~\eqref{eq:general_limitequation_infinteinteraction} is the (non-degenerate) advection-diffusion equation
		\begin{align} 
			\begin{split}
				\label{eq:continuumlimit_materialflow_unlimitedsupport}
				\partial_t \rho + \nabla_x \cdot \left(\rho  v_T -  k(\rho) \nabla \rho \right) = 0, 
			\end{split}
		\end{align}
		where $k(\rho) = C \rho, C = \frac{\bar{C}}{\gamma_b}$.
	\end{lemma}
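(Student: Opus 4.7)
The plan is to mirror the proof of Proposition~\ref{prop:continuumlimit_materialflow}, exploiting the fact that $d_R=\infty$ removes both the ball of integration and, more importantly, the density threshold. I would start from the coupled system~\eqref{eq:macro2ndorder_1}--\eqref{eq:macro2ndorder_2} and evaluate the interaction integral using hypothesis~\eqref{eq:requirementforceterm_unlimitedsupport}. Since the Taylor-expansion argument that produced~\eqref{eq:derivation_rhomax} only generates a nontrivial indicator when the comparison between a single particle mass $m$ and the accumulated mass in the sensing ball $B_{d_R}$ is informative, and since in the present case the sensing region is all of $\mathbb{R}^2$, the critical density collapses to $\rhocrit = 0$ and no Heaviside factor appears. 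Thus
\begin{align*}
\int F_\infty(x-y)\rho(y)\,dy \;\approx\; -\int_{\mathbb{R}^2} F_\infty(z)\langle\nabla\rho(x),z\rangle\,dz \;=\; -\Psi_\infty(\rho,\nabla\rho) \;=\; -\bar{C}\,\nabla\rho(x),
\end{align*}
directly by the assumed limit.

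Next, because the bottom friction $G(v)=-\gamma_b(v-v_T)$ is invertible with $G^{-1}(y)=v_T-y/\gamma_b$, I would solve~\eqref{eq:macro2ndorder_2} for the mono-kinetic velocity to obtain $u=v_T-\frac{\bar{C}}{\gamma_b}\nabla\rho$, valid for every $\rho>0$ and without any switch at a threshold. Substituting $u$ into the continuity equation~\eqref{eq:macro2ndorder_1} then yields
\begin{align*}
\partial_t\rho+\nabla_x\cdot\Bigl(\rho v_T-\tfrac{\bar{C}}{\gamma_b}\,\rho\,\nabla\rho\Bigr)=0,
\end{align*}
which is exactly~\eqref{eq:continuumlimit_materialflow_unlimitedsupport} with $k(\rho)=C\rho$ and $C=\bar{C}/\gamma_b$. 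The factor $\rho$ in $k(\rho)$ comes from the continuity-equation structure $\nabla\cdot(\rho u)$, not from the interaction kernel, which is entirely analogous to how $k(\rho)=\frac{\bar{C}}{\gamma_b}\rho H(\rho-\rhocrit)$ arose in Proposition~\ref{prop:continuumlimit_materialflow}.

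There is no genuine obstacle in the argument: once~\eqref{eq:requirementforceterm_unlimitedsupport} is posited, the derivation is essentially a copy of the compact-support case with the Heaviside factor suppressed. The only point that deserves a sentence of care is justifying that the Taylor-expansion remainder controlled in the derivation of $\Psi_\infty$ still vanishes in the $R\to 0$ limit when integration is over all of $\mathbb{R}^2$; this requires enough decay of $F_\infty$ at infinity to make $\int_{\mathbb{R}^2}|F_\infty(z)||z|\,dz$ finite, which is tacitly built into hypothesis~\eqref{eq:requirementforceterm_unlimitedsupport} through the assertion $\bar{C}<\infty$. With this integrability in place, the remainder terms are of higher order in $\nabla\rho$ and drop out exactly as in Proposition~\ref{prop:continuumlimit_materialflow}, completing the proof.
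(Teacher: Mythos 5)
Your proposal is correct and follows essentially the same route as the paper: compute $\Psi_\infty$ from hypothesis~\eqref{eq:requirementforceterm_unlimitedsupport} with $\rhocrit=0$, invert the bottom friction via $G^{-1}(y)=v_T-y/\gamma_b$, and substitute the resulting velocity into the continuity equation exactly as in Proposition~\ref{prop:continuumlimit_materialflow}. The only cosmetic difference is that the paper instantiates a concrete exemplary potential $F_\infty(x)=k_m\frac{x}{\norm{x}}\exp(-3\norm{x}^3)$ to exhibit $\overline{C}=\frac{1}{9}k_m\pi$, whereas you argue directly from the assumed limit; your added remark on the integrability of $\int_{\mathbb{R}^2}\abs{F_\infty(z)}\,\abs{z}\,dz$ being implicit in $\bar{C}<\infty$ is a reasonable clarification the paper leaves tacit.
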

	
	\begin{proof}
		For the exemplary interaction potential $F_\infty(x) = k_m\frac{x}{\norm{x}} \exp(-3\norm{x}^3)$, we obtain 
		\begin{align*}
			\Psi_\infty(\rho, \nabla \rho) = \underbrace{\frac{1}{9} k_m \pi}_{\overline{C}} \nabla \rho H(\rho).
		\end{align*}
		The rest of the proof is analogous to the proof of Proposition~\ref{prop:continuumlimit_materialflow}.
	\end{proof}
	
	Qualitatively, this limit equation is also derived in~\cite{GoeKlaTiw2015}. Again, the diffusion is density dependent and scales linearly with the density. The existence and uniqueness results, compare Section~\ref{sec:intro}, are applicable for~\eqref{eq:continuumlimit_materialflow_unlimitedsupport} in an unbounded domain, but not for~\eqref{eq:continuumlimit_materialflow_csupport}, since $k(\rho)$ is not continuous and does not meet the regularity requirements to obtain mentioned results. Moreover, existence and uniqueness of the solution is also not proven for~\eqref{eq:continuumlimit_materialflow_csupport},~\eqref{eq:continuumlimit_materialflow_unlimitedsupport} on a bounded domain $\Omega \subset \mathbb{R}^2$ equipped with the boundary conditions~\eqref{eq:boundarycondition}. However, we see in Section~\ref{sec:numericalresults} that  numerical examples provide good results in bounded domains. 
	
	\subsubsection{Pedestrian Crowds} \label{sec:application_peds}
	The dynamical behaviour of pedestrian crowds has been modeled by  the deterministic microscopic social force model~\cite{GoeKnaSch2018,HelMol1995}. Our  microscopic model~\eqref{eq:generalmicromodel} can be 
	written in this way by considering  the velocity selection mechanism $G(v)$ as the destination force $G^{dest}(x,v) = \frac{1}{\tau} (v^C D(x) -v)$. Here, $v^C >0$ is a comfort speed which is achieved within the relaxation time $\tau >0$ and $D:\mathbb{R}^2 \rightarrow \mathbb{R}^2$ describes the direction to a  destination. The interaction force $F$ in this context describes attraction and repulsion between pedestrians. We assume that $F$ is chosen such that~\eqref{eq:requirementforceterm_limitedsupport} is satisfied. Exploiting the inverse function $G^{-1}(y) = v^C D(x) - y \tau$, the limit equation is the degenerate advection-diffusion equation with space dependent advection term
	\begin{align}
		\partial_t \rho + \nabla_x \cdot \left(\rho \left(v^c D(x) - \tau\bar{C} \nabla \rho H(\rho -\rhomax) \right)\right) = 0.
		\label{eq:continuumlimit_pedestrians}
	\end{align}

	\subsubsection{Swarming} \label{sec:application_swarming}
	
	We consider the microscopic attraction-repulsion model for biological swarms
	\begin{align}
		\begin{split}
			\label{eq:attraction_repulsion}
			\frac{dx_i}{dt} &= v_i \\
			m \frac{dv_i}{dt} &=  (\alpha - \beta \norm{v}^2)v + \lambda \nabla_{x_i} \sum_{j \neq i } U(x_i-x_j),
		\end{split}
	\end{align}
	where $U$ is a potential. The potential strength $\lambda>0$ scales the impact of the potential forces relative to the self-propelling force~\cite{DorPanCar2008}. The velocity selection mechanism is given by
	\begin{align}\label{eq:selfpropellingforce_swarms}
		G(v) = (\alpha - \beta \norm{v}^2)v,
	\end{align}
	with $\alpha , \beta \geq 0 $. For velocity independent forces, i.e., $\alpha, \beta = 0$,  equations~\eqref{eq:attraction_repulsion} form a Hamiltonian system with energy conservation. Acceleration  $\alpha v$ and deceleration dynamics $-\beta \norm{v}^2 v$ are added to the system with $\alpha, \beta >0$. In the case $\alpha, \beta \geq 0$, after a transition phase, the Hamiltonian system is recovered when particles travel at equilibrium speed $\norm{v} = \sqrt{\alpha / \beta}$. An important property to describe the large particle limit of the system is $H$-stability of the potential $U$. In particular, $H$-stability of the potential ensures that particles do not collapse for $N \rightarrow \infty$. For non-$H$-stable (catastrophic) systems, increasing particle numbers $N$ reduce the particle spacing and particles collapse. In either case, for $H$-stable as well as catastrophic systems, stationary solutions can emerge and are characterized by $\alpha, \beta$ and the potential characteristic, see~\cite{DorChuBer2006}.
	
	For $\alpha,\beta > 0$ and a Morse potential where the potential minimum exists, stationary states of the large particle systems are observed in which the particles form a coherent structure and travel at speed with $\norm{v}= \sqrt{\alpha / \beta}$ for the $H$-stable as well as the catastrophic situation~\cite{DorChuBer2006}. For $N \rightarrow \infty$, a well-defined spacing is maintained for the $H$-stable potential, which is also true for finite N  for the catastrophic potential.

	To model the collective behavior macroscopically for $\alpha, \beta >0$\footnote{We remark that for $\alpha , \beta >0$, $G$ is not invertible and we cannot directly apply~\eqref{eq:general_limitequation}.}, we consider~\eqref{eq:continuumlimit_pedestrians} with average velocity $\overline{v}, \norm{\overline{v}} = v^c = \sqrt{\alpha / \beta}$. The macroscopic limit is then given by
	\begin{align} \label{eq:macroscopicswarming_singleswarm}
		\partial_t \rho + \nabla_x \cdot \left(\rho  \overline{v} - \rho C \nabla \rho H(\rho - \rhocrit) \right),
	\end{align}
	where $\overline{v}$ as the average speed of the swarm. The density threshold $\rhocrit$ above which diffusion in a swarm occurs is not to be understood as a maximum density, when individuals are as closely together as possible, but instead as the preferred density of the undisturbed flock above which diffusion is active. Collisions of the swarm with obstacles lead to internal diffusion of the swarm and can change the direction $\overline{v}$. The collisional behavior will be validated by numerical comparisons of~\eqref{eq:macroscopicswarming_singleswarm} to corresponding microscopic behavior in Section~\ref{sec:numericalresults}.
	
	For $\alpha = 0$ and $\beta > 0$, the macroscopic equation~\eqref{eq:general_limitequation} is given by the following Lemma.

	\begin{lemma} \label{prop:continuumlimit_swarming_beta0}
		For the attraction-repulsion model~\eqref{eq:attraction_repulsion} with velocity selection mechanism~\eqref{eq:selfpropellingforce_swarms}, $\alpha =0, \beta > 0$ and interaction force $F = \nabla U$  obeying~\eqref{eq:requirementforceterm_limitedsupport}, the macroscopic limit given by equation~\eqref{eq:general_limitequation} is the diffusion equation
		\begin{align} \label{eq:diffusion_swarming}
			\begin{split}
				\partial_t \rho -  \nabla_x \cdot \left(  k(\rho) \nabla \rho \right) = 0, 
			\end{split}
		\end{align}
		where 
		\begin{equation*}
			k(\rho) = \rho \frac{\bar{C} \nabla \rho H(\rho- \rhocrit)}{\sqrt[3]{\norm{\bar{C}\nabla \rho H(\rho- \rhocrit) } \beta}},
		\end{equation*} and threshold density $ \rhocrit = \frac{m_0}{\pi  R_0^2 } z^2 $  if $R/d_R = z > 0$.
		For an interaction force~\eqref{eq:interactionforce} with $d_R =\infty$ obeying~\eqref{eq:requirementforceterm_unlimitedsupport}, the macroscopic limit given by equation~\eqref{eq:general_limitequation_infinteinteraction} is the diffusion equation~\eqref{eq:diffusion_swarming} with $\rhocrit = 0$.
	\end{lemma}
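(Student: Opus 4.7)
The plan mirrors the proof of Proposition~\ref{prop:continuumlimit_materialflow}. The hypothesis that $F=\nabla U$ obeys~\eqref{eq:requirementforceterm_limitedsupport} already supplies, via the derivation leading to~\eqref{eq:general_limitequation}, the expression
$$\Psi_{d_R}(\rho,\nabla\rho) \;=\; \bar{C}\,\nabla\rho\,H(\rho-\rhocrit),$$
so only two new ingredients are required: (i) explicit inversion of the self-propelling force $G(v)=-\beta\norm{v}^2 v$ in the regime $\alpha=0,\beta>0$, and (ii) identification of the threshold density $\rhocrit$ under the scaling $R/d_R=z$.

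For (i), I would exploit that $G$ is a radial map. Setting $y:=\Psi_{d_R}(\rho,\nabla\rho)$, the equation $-\beta\norm{u}^2 u = y$ forces $u$ antiparallel to $y$ with magnitude $\norm{u}=(\norm{y}/\beta)^{1/3}$, whence
$$G^{-1}(y) \;=\; -\,\frac{y}{(\beta\norm{y}^2)^{1/3}} \;=\; -\,\frac{y}{\beta^{1/3}\norm{y}^{2/3}},$$
extended continuously by $G^{-1}(0):=0$. Inserting $u=G^{-1}(y)$ into the continuity equation~\eqref{eq:macro2ndorder_1} yields a flux $\rho u$ that is purely diffusive — no transport term arises, because there is no preferred direction when $\alpha=0$ — and rearranging $\rho u = -k(\rho)\nabla\rho$ produces equation~\eqref{eq:diffusion_swarming} with the stated diffusion coefficient $k(\rho)$. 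This step is the only algebraic novelty relative to Proposition~\ref{prop:continuumlimit_materialflow} and is otherwise mechanical.

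For (ii), I would start from the general threshold $\rhocrit = m_0 R^2/(\pi d_R^2 R_0^2)$ derived in~\eqref{eq:derivation_rhomax} and substitute the fixed ratio $R/d_R=z$ to obtain $\rhocrit = m_0 z^2 /(\pi R_0^2)$. The unlimited-support case is then immediate: the Remark following~\eqref{eq:requirementforceterm_unlimitedsupport} gives $\Psi_\infty(\rho,\nabla\rho)=\bar{C}\nabla\rho$ with $\rhocrit=0$, and repeating the inversion of $G$ word-for-word produces~\eqref{eq:diffusion_swarming} with the Heaviside factor replaced by~$1$.

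The main subtlety I expect to require attention is the degeneracy of $G$ at $v=0$: the Jacobian of $G$ vanishes there, so $G^{-1}$ is only $1/3$-H\"older continuous at the origin and the resulting coefficient $k(\rho)$ is non-Lipschitz in $\nabla\rho$. At the formal level this is harmless — one consistently sets $u=0$ on $\{\rho\le\rhocrit\}\cup\{\nabla\rho=0\}$, so the flux is unambiguously defined — but it distinguishes this equation qualitatively from the affine inversion $G^{-1}(y)=v_T-y/\gamma_b$ used in Proposition~\ref{prop:continuumlimit_materialflow}, and any subsequent well-posedness theory for~\eqref{eq:diffusion_swarming} would have to accommodate this singular dependence on $\nabla\rho$.
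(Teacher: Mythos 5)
Your argument follows the paper's proof step for step: take $\Psi_{d_R}(\rho,\nabla\rho)=\bar{C}\,\nabla\rho\,H(\rho-\rhocrit)$ from the hypothesis, invert the radial map $G$, substitute $u=G^{-1}(\Psi_{d_R})$ into the continuity equation, obtain $\rhocrit=m_0z^2/(\pi R_0^2)$ by inserting $R/d_R=z$ into~\eqref{eq:derivation_rhomax}, and dispose of the unlimited-support case via $\Psi_\infty$ and $\rhocrit=0$ exactly as the paper does. The one point of divergence is the inversion itself: for $G(v)=-\beta\norm{v}^2v$ you obtain $G^{-1}(y)=-y/\sqrt[3]{\beta\norm{y}^2}$, which is the correct inverse (taking norms gives $\beta\norm{v}^3=\norm{y}$), whereas the paper's proof and the stated coefficient $k(\rho)$ carry $\sqrt[3]{\norm{y}\beta}$, i.e.\ $\norm{y}$ to the first power under the cube root; that expression inverts neither the stated $G$ nor $-\beta\norm{v}v$ (which would produce a square root), so the mismatch appears to be an error in the paper rather than in your computation. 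Apart from that exponent your derivation reproduces the paper's conclusion, and your closing caveat about the $1/3$-H\"older degeneracy of $G^{-1}$ at the origin is correct and goes beyond what the paper records.
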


	\begin{proof}
		We restrict ourselves to the case with $R/d_R = z  > 0$. Since the force term satisfies~\eqref{eq:requirementforceterm_limitedsupport}, we have $\Psi_{d_R}(\rho, \nabla \rho) = \bar{C} \nabla \rho H(\rho - \rhocrit)$. 
		
		We have to determine $G^{-1}(y)$ for $y \in \mathbb{R}^2$.
		Note that we are able to uniquely determine $G^{-1}(0)$ because $G(v) = - \beta \norm{v} v = 0$ has only one solution.  Therefore, we have
		\begin{align*} 
			G^{-1}(y) = \begin{cases}
				\frac{y}{- \sqrt[3]{\norm{y} \beta}} &\text{if } \norm{y} \neq 0, \\
				0 &\text{ if } \norm{y} = 0, \\
			\end{cases}
		\end{align*}
		and the limit is the diffusion equation
		\begin{align*} 
			\partial_t \rho + \nabla_x \cdot \left(- \rho \frac{\bar{C} \nabla \rho H(\rho- \rhocrit)}{\sqrt[3]{\norm{\bar{C}\nabla \rho H(\rho- \rhocrit)}\beta}  }\right) = 0.
		\end{align*}
	\end{proof}

	
	\section{Numerical evaluation in bounded domains} \label{sec:numericalresults}
	For the numerical evaluation of the different types of (advection-)diffusion equations, we present a tailored discretization scheme with operator splitting. In our numerical examples, we particularly focus on the influence of boundaries on the dynamics of the macroscopic models. 
	
	\subsection{Operator splitting method}\label{sec:boundaryconditions}
	
	We discretize a rectangular spatial domain $(\Omega \cup \partial \Omega) \subset \mathbb{R}^2$  with grid points $x_{ij} = (i \Delta x^{(1)},j \Delta x^{(2)}),$ $ (i,j) \in A = \lbrace 1, \dots N_{x^{(1)}}  \rbrace \times  \in \lbrace 1, \dots N_{x^{(2)}} \rbrace$. The boundary is described by the set of indices $B \subset A$. The time discretization is given by $t^s = s \Delta t$. We compute the approximate solution 
	\begin{equation*}
		\rho(x,t) = \rho_{ij}^s ~ \text{for } \begin{cases}
			x \in C_{ij} \\
			t \in [t^{s}, t^{s+1}),
		\end{cases}
	\end{equation*}
	to~\eqref{eq:advectiondiffusion_linear}-\eqref{eq:initialcondition} where $ C_{ij} = \left[(i-\einhalb) \Delta x^{(1)}, (i+\einhalb) \Delta x^{(1)}\right) \times \left[(j-\einhalb) \Delta x^{(2)}, (j+ \einhalb) \Delta x^{(2)}\right)$.  We use an operator splitting method to separate the advective and diffusive terms.

	Since the hyperbolic part of the material flow limit~\eqref{eq:continuumlimit_materialflow_csupport} and the swarming limit~\eqref{eq:macroscopicswarming_singleswarm} reduces to linear transport, we solve the advective part with flux $f^{(l)}(\rho,x) = v^{(l)}(x) \rho, l=1,2$ using the Upwind scheme combined with  dimensional splitting 
	\begin{align}
		\begin{split} \label{eq:Upwind}
			\tilde{\rho}_{ij}^{s+1} &= \rho_{ij}^s - \frac{\Delta t}{\Delta x^{(1)}} \left(F^{(1)}_{up}(\rho_{ij}^s,\rho_{i+1j}^s) - F^{(1)}_{up}(\rho_{i-1j}^s, \rho_{ij}^s) \right) \\
			\overline{\rho}_{ij}^{s+1} &= \tilde{\rho}_{ij}^s - \frac{\Delta t}{\Delta x^{(2)}} \left(F^{(2)}_{up}(\tilde{\rho}_{ij}^s,\tilde{\rho}_{ij+1}^s) - F^{(2)}_{up}(\tilde{\rho}_{ij-1}^s, \tilde{\rho}_{ij}^s) \right), 
		\end{split}
	\end{align}
	where
	\begin{equation*}
		F^{(1)}_{up}(\rho_{ij}^s, \rho_{i+1j}^s) = \begin{cases}
			\rho_{ij}^s v^{(1)}_{ij}  &\text{ if } v^{(1)}_{ij}\geq 0 \\
			\rho_{i+1j}^s v^{(1)}_{ij}  &\text{ otherwise, }
		\end{cases} 
	\end{equation*}
	with $ v^{(1)}_{ij} =v^{(1)}(x_{ij})$ and $F_{up}^{(2)}$ is defined analogously. The diffusive part is solved with the implicit finite difference method
	\begin{align}
		\rho_{ij}^{s+1} &= \overline{\rho}_{ij}^s + \frac{\dt}{\dx^{(1)} \dx^{(2)}} \left(b_{i-1j}^{s+1} + b_{i+1j}^{s+1} - 4b_{ij}^{s+1} + b_{ij-1}^{s+1} + b_{ij+1}^{s+1} \right), \label{eq:implicitscheme}
	\end{align}
	where $b_{ij}^{s+1} = b(\rho_{ij}^{s+1})$. This combination allows to use the time step restriction given by the CFL condition of the hyperbolic part 
	\begin{align} \label{eq:CFL_operatorsplitting}
		\Delta t \leq \min_{(i,j)} \frac{1}{\frac{\abs{v^{(1)}_{ij}}}{\Delta x^{(1)}} + \frac{\abs{v^{(2)}_{ij}}}{\Delta x^{(2)}}},
	\end{align}
	without any additional restriction from the diffusive part, see also~\cite{HolKarLie2000_partII}. As a result, the splitting method enables for large time steps even for relatively high diffusion. 
	
	Numerically, we approximate the Heaviside function $H(\rho - \rhocrit)$, $k(\rho)=C \rho H(\rho-\rhocrit)$ and $b(\rho) = \int_{0}^\rho k(y) \, dy$ with smooth approximations
	\begin{align}
		\begin{split} \label{eq:Heaviside_xi}
			H_{\xi, \rhocrit}(\rho) &= \int_{\rhocrit}^{\rho} \frac{2}{\xi} \max \bigg \lbrace 0, 1- \bigg \vert \frac{2(y- \rhocrit)}{\xi} - 1 \bigg \vert \bigg \rbrace  \, dy  \\
			k_{\xi,\rhocrit}(\rho )  &= C \rho H_{\xi, \rhocrit}(\rho)  \\
			b_{\xi,\rhocrit}(\rho) &= \int_0^\rho Cy H_{\xi, \rhocrit}(y) \, dy, 
		\end{split}
	\end{align}
	for which
	\begin{align*}
		k_{\xi, \rhocrit}(\rhocrit) = 0, \qquad  &k_{\xi, \rhocrit} (\rhocrit + \xi) = C (\rhocrit + \xi), \\
		k_{\xi, \rhocrit}'(\rhomax) = 0, \qquad  &k_{\xi, \rhocrit}'(\rhomax + \xi) = C.
	\end{align*}
	Figure~\ref{img:numapprox_H}-\ref{img:numapprox_k} show the approximations for $\rhocrit = 1$ and varying diffusion constants $C$. In the following, we choose consistently $\xi =10^{-2}$ in all simulations.

	\begin{figure}[htp]
		\begin{subfigure}{0.45\textwidth}
%
%
\begin{tikzpicture}
\setlength\fwidth{0.9\textwidth}
\begin{axis}[%
width=0.951\fwidth,
height=0.75\fwidth,
at={(0\fwidth,0\fwidth)},
scale only axis,
xmin=0,
xmax=2,
xtick={0,1,1.2},
xticklabels={{0},{1},{1$+ \xi$}},
xlabel style={font=\color{white!15!black}},
xlabel={$\rho$},
ymin=0,
ymax=1.2,
ylabel style={font=\color{white!15!black}},
ylabel={$\text{H}_{\xi,1}$},
axis background/.style={fill=white},
legend style={legend cell align=left, align=left, draw=white!15!black}
]
\addplot [color=black]
  table[row sep=crcr]{%
0	0\\
0.98989898989899	0\\
1.01010101010101	0.00510151930131419\\
1.03030303030303	0.0459137148351059\\
1.05050505050505	0.127538007973798\\
1.07070707070707	0.249974564538777\\
1.09090909090909	0.41322312382638\\
1.11111111111111	0.604938675059589\\
1.13131313131313	0.764105642884292\\
1.15151515151515	0.882460971504373\\
1.17171717171717	0.960004082183683\\
1.19191919191919	0.996734807071596\\
1.21212121212121	1.0000001188745\\
2	0.999999836839773\\
};

\end{axis}

\begin{axis}[%
width=1.227\fwidth,
height=0.92\fwidth,
at={(-0.16\fwidth,-0.101\fwidth)},
scale only axis,
xmin=0,
xmax=1,
ymin=0,
ymax=1,
axis line style={draw=none},
ticks=none,
axis x line*=bottom,
axis y line*=left,
legend style={legend cell align=left, align=left, draw=white!15!black}
]
\end{axis}
\end{tikzpicture}%
			\subcaption{Heaviside function $H_{\xi,1}$ for $\rhocrit = 1$.}
			\label{img:numapprox_H}
		\end{subfigure}
		\hfill
		\begin{subfigure}{0.45\textwidth}
%
%
\begin{tikzpicture}
\setlength\fwidth{0.9\textwidth}
\begin{axis}[%
width=0.951\fwidth,
height=0.75\fwidth,
at={(0\fwidth,0\fwidth)},
scale only axis,
xmin=0,
xmax=2,
xtick={0,1,1.2},
xticklabels={{0},{1},{ 1$+ \xi$}},
xlabel={$\rho$},
ymin=0,
ymax=6,
ylabel={$\text{k}_{\xi,1}$},
axis background/.style={fill=white},
legend style={at={(0.03,0.97)}, anchor=north west, legend cell align=left, align=left, draw=white!15!black}
]
\addplot [color=black]
  table[row sep=crcr]{%
0	0\\
0.98989898989899	0\\
1.01010101010101	0.00515304979930731\\
1.03030303030303	0.0473050395270787\\
1.05050505050505	0.133979321507828\\
1.07070707070707	0.26764953374859\\
1.09090909090909	0.450788862356051\\
1.11111111111111	0.672154083399543\\
1.13131313131313	0.864442747505462\\
1.15151515151515	1.01616717930807\\
1.17171717171717	1.12485326801321\\
1.19191919191919	1.18802734580251\\
1.21212121212121	1.21212135621151\\
2	1.99999967367955\\
};
\addlegendentry{$C=1$}

\addplot [color=black, dashdotted]
  table[row sep=crcr]{%
0	0\\
0.98989898989899	0\\
1.01010101010101	0.0103060995986146\\
1.03030303030303	0.0946100790541577\\
1.05050505050505	0.267958643015657\\
1.07070707070707	0.53529906749718\\
1.09090909090909	0.901577724712102\\
1.11111111111111	1.34430816679909\\
1.13131313131313	1.72888549501092\\
1.15151515151515	2.03233435861613\\
1.17171717171717	2.24970653602641\\
1.19191919191919	2.37605469160502\\
1.21212121212121	2.42424271242303\\
2	3.99999934735909\\
};
\addlegendentry{$C=2$}

\addplot [color=black, dotted]
  table[row sep=crcr]{%
0	0\\
0.98989898989899	0\\
1.01010101010101	0.0154591493979215\\
1.03030303030303	0.141915118581236\\
1.05050505050505	0.401937964523485\\
1.07070707070707	0.80294860124577\\
1.09090909090909	1.35236658706815\\
1.11111111111111	2.01646225019863\\
1.13131313131313	2.59332824251639\\
1.15151515151515	3.0485015379242\\
1.17171717171717	3.37455980403961\\
1.19191919191919	3.56408203740752\\
1.21212121212121	3.63636406863454\\
2	5.99999902103864\\
};
\addlegendentry{$C=3$}

\end{axis}

\begin{axis}[%
width=1.227\fwidth,
height=0.92\fwidth,
at={(-0.16\fwidth,-0.101\fwidth)},
scale only axis,
xmin=0,
xmax=1,
ymin=0,
ymax=1,
axis line style={draw=none},
ticks=none,
axis x line*=bottom,
axis y line*=left,
legend style={legend cell align=left, align=left, draw=white!15!black}
]
\end{axis}
\end{tikzpicture}%
			\subcaption{Strength of diffusion $k_{\xi, 1}$ $\rhocrit = 1$.}
			\label{img:numapprox_k}
		\end{subfigure}
		\caption{Numerical approximations of the Heaviside function~\eqref{eq:Heaviside_xi}.}
	\end{figure}
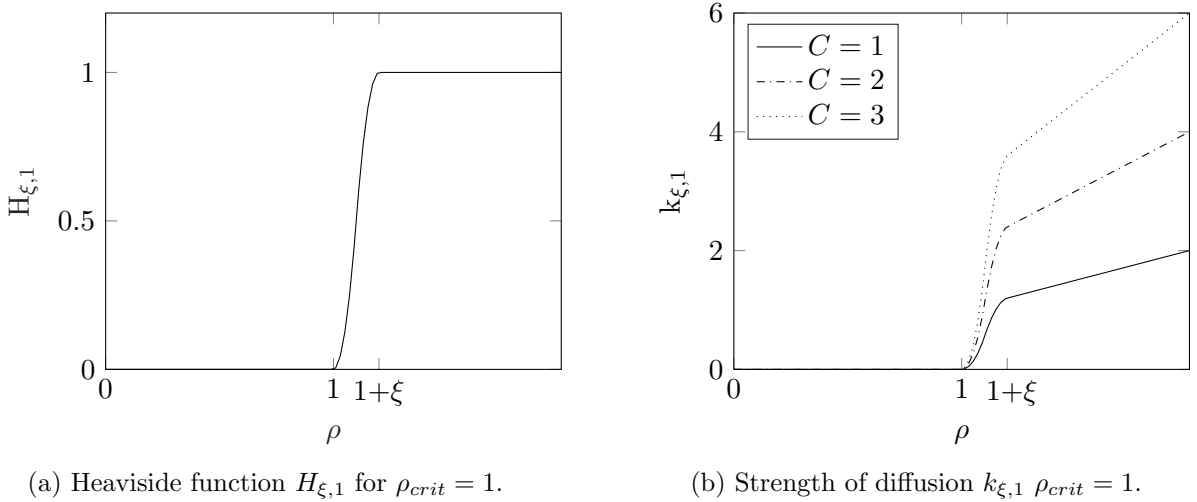

	To satisfy the boundary condition~\eqref{eq:boundarycondition}, we apply zero flux boundary conditions to the advective and to the diffusive flux
	\begin{align*}f \cdot \vec{n} &= 0, x \in \partial \Omega \\
		( k(\rho) \nabla \rho) \cdot \vec{n} &= 0, x \in \partial \Omega,
	\end{align*}
	where $\vec{n}=(n^{(1)},n^{(2)})^T$ is the outer normal vector at the boundary.

	
	\subsection{Numerical results for the material flow model} \label{sec:numerics_materialflow}

	\subsubsection{Comparisons of local models: degenerate vs. non-degenerate} 
	
	To compare the limit equations~\eqref{eq:continuumlimit_materialflow_csupport} and~\eqref{eq:continuumlimit_materialflow_unlimitedsupport}, we perform the following experiment: A bulk of material with sharp edges is placed with uniform spacing on a conveyor belt. The initial density inside the bulk is set to $\rho_0= 0.8$. We send the bulk of material with the conveyor belt velocity $v_T = (1,0)^T$ against a boundary 
	which blocks the transportation, as shown in Figure~\ref{img:MAW_initialconfiguration}. 
	\begin{figure}[h!]
		\centering
		\includegraphics[width=0.45\textwidth]{./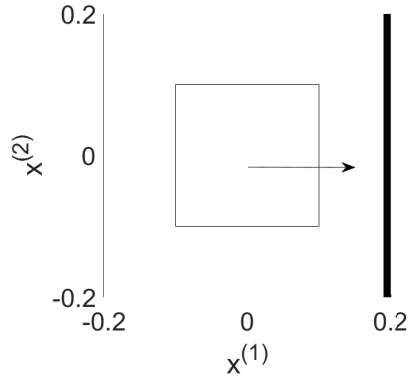}
		\caption{Experimental setup.}
		\label{img:MAW_initialconfiguration}
	\end{figure}
	
	We expect that the initial configuration of the bulk governs a linear transport until it reaches the boundary at time $t=0.1$. 
	Since the density is below the maximum density $\rhomax = 1$,  the material might be compressed up to the density $\rhomax$ when interacting with the boundary.  We use the operator splitting method with step sizes $\dx^{(1)} = \dx^{(2)} = 10^{-2}$ and the CFL condition~\eqref{eq:CFL_operatorsplitting}.
	
	For the degenerate advection-diffusion equation~\eqref{eq:continuumlimit_materialflow_csupport} with diffusion coefficients $k_1=10 \rho H_{\xi, 1}(\rho)$ and $k_2 = \rho H_{\xi, 1}(\rho)$, the bulk is transported until it reaches the boundary at $t=0.1$. The maximum material density $\max_x \rho(x,t)=\rho_0$ is constant for $t < 0.1$ and increases when the bulk hits the boundary. For $k_2(\rho)$, the numerical density exceeds the maximum density, while for $k_1(\rho)$ 
	the numerical density stays at the level of the critical density $\rhomax = 1$ suggesting a maximum principle (Fig. ~\ref{img:max_density_plot}).

	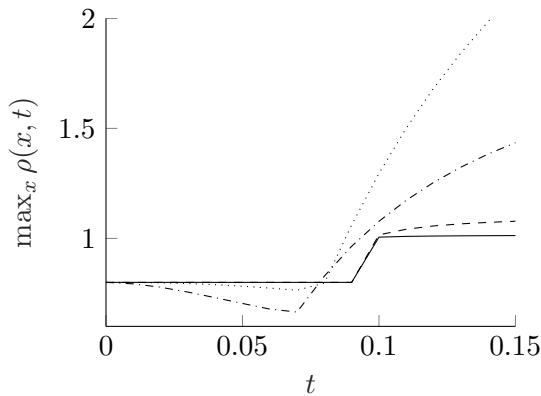
\begin{figure}[htp]
		\begin{center}
%
%
\begin{tikzpicture}
\setlength\fwidth{0.35\textwidth}
\begin{axis}[%
scaled x ticks = false,
width=0.989\fwidth,
height=0.75\fwidth,
at={(0\fwidth,0\fwidth)},
scale only axis,
xmin=0,
xmax=0.15,
xtick={0, 0.05, 0.1, 0.15},
xticklabels={0, 0.05, 0.1, 0.15},
xlabel style={font=\color{white!15!black}},
xlabel={$t$},
ymin=0.6,
ymax=2,
ylabel={$\max_x \rho(x,t)$},
ylabel style={font=\color{white!15!black}},
axis background/.style={fill=white},
axis x line*=bottom,
axis y line*=left,
legend style={at={(0.03,0.97)}, anchor=north west, legend cell align=left, align=left, draw=white!15!black}
]
\addplot [color=black]
  table[row sep=crcr]{%
0	0.8\\
0.0900000000000001	0.8\\
0.1	1.00556832162566\\
0.11	1.00899359892064\\
0.12	1.01063163865188\\
0.13	1.01156348123998\\
0.14	1.01221081126298\\
0.15	1.01272058488643\\
};

\addplot [color=black, dashed]
  table[row sep=crcr]{%
0	0.8\\
0.0900000000000001	0.8\\
0.1	1.01597670937807\\
0.11	1.04148705730209\\
0.12	1.0565512768922\\
0.13	1.06598482893682\\
0.14	1.07268986674118\\
0.15	1.07793475654198\\
};

\addplot [color=black, dotted]
  table[row sep=crcr]{%
0	0.8\\
0.00999999999999979	0.79956765700191\\
0.02	0.798004775712834\\
0.0299999999999998	0.794748137425014\\
0.04	0.789574012894263\\
0.0499999999999998	0.782573790323108\\
0.0600000000000001	0.774031643175097\\
0.0699999999999998	0.764313142145699\\
0.0800000000000001	0.796758657899374\\
0.0899999999999999	1.06111284113282\\
0.1	1.29511216123908\\
0.11	1.50102160537112\\
0.12	1.68251060194268\\
0.13	1.84306118818119\\
0.14	1.98558499220628\\
0.15	2.1124072049775\\
};

\addplot [color=black, dashdotted]
  table[row sep=crcr]{%
0	0.8\\
0.01	0.793324964563467\\
0.02	0.7778373871959\\
0.03	0.755810712903783\\
0.04	0.730441476807267\\
0.05	0.70417339373626\\
0.0600000000000001	0.678544506821724\\
0.0700000000000001	0.665075926652516\\
0.0800000000000001	0.828414634312665\\
0.0900000000000001	0.964143933460738\\
0.1	1.07751347861321\\
0.11	1.17303894369188\\
0.12	1.25421358594782\\
0.13	1.32367569839673\\
0.14	1.38342381615867\\
0.15	1.43499195964985\\
};

\end{axis}

\begin{axis}[%
width=1.287\fwidth,
height=0.965\fwidth,
at={(-0.176\fwidth,-0.143\fwidth)},
scale only axis,
xmin=0,
xmax=1,
ymin=0,
ymax=1,
axis line style={draw=none},
ticks=none,
axis x line*=bottom,
axis y line*=left,
legend style={legend cell align=left, align=left, draw=white!15!black}
]
\end{axis}
\end{tikzpicture}%
		\end{center}
		\caption{ Maximum density as a function of time for Eq.\eqref{eq:continuumlimit_materialflow_csupport} with diffusion coefficient $k_1$ (solid line) and  $k_2$ (dashed line), and  for Eq.\eqref{eq:continuumlimit_materialflow_unlimitedsupport} with diffusion coefficient $k_3$ (dashed dotted) and $k_4$ (dotted line). }
		\label{img:max_density_plot}
	\end{figure}
	
	For the non-degenerate advection-diffusion equation~\eqref{eq:continuumlimit_materialflow_unlimitedsupport}, small diffusion coefficients have to be considered to portray the free flow properly. With the diffusion coefficient $k_3(\rho) = 0.05\rho$, the influence of the diffusion is too high in the free flow phase where the density of the bulk is reduced before the interaction with the boundary happens. For $k_4(\rho) = 0.02 \rho$, the density formation is better captured, but diffusion is not strong enough to portray the boundary interaction correctly. 
	In both cases, when the boundary is reached, diffusion is not strong enough to properly capture the dynamics. Moreover, the impact of the diffusion in the free flow phase misrepresents the time at which the material reaches the boundary. 
	
	Figure~\ref{img:MAW_densityt0k15} shows the density plots of the solutions at $t=0.15$.  Figures~\ref{img:MAW_densityt0k15_1}-\ref{img:MAW_densityt0k15_2} show sharp edges at the rear edge of the bulk and the spread out of material at the boundary. Diffusion smears out the edges of the bulk in Figures~\ref{img:MAW_densityt0k15_3}-\ref{img:MAW_densityt0k15_4} and the material is compressed at the boundary.
	
	Summarizing, Figs. \ref{img:max_density_plot} and \ref{img:MAW_densityt0k15} show that the model~\eqref{eq:continuumlimit_materialflow_csupport} is 
	a better approach, compared to~\eqref{eq:continuumlimit_materialflow_unlimitedsupport}, to describe the evolution of material flow on a conveyor belt.

	\begin{figure}[htp]
		\begin{subfigure}{0.45\textwidth}
			\includegraphics[width=\textwidth]{./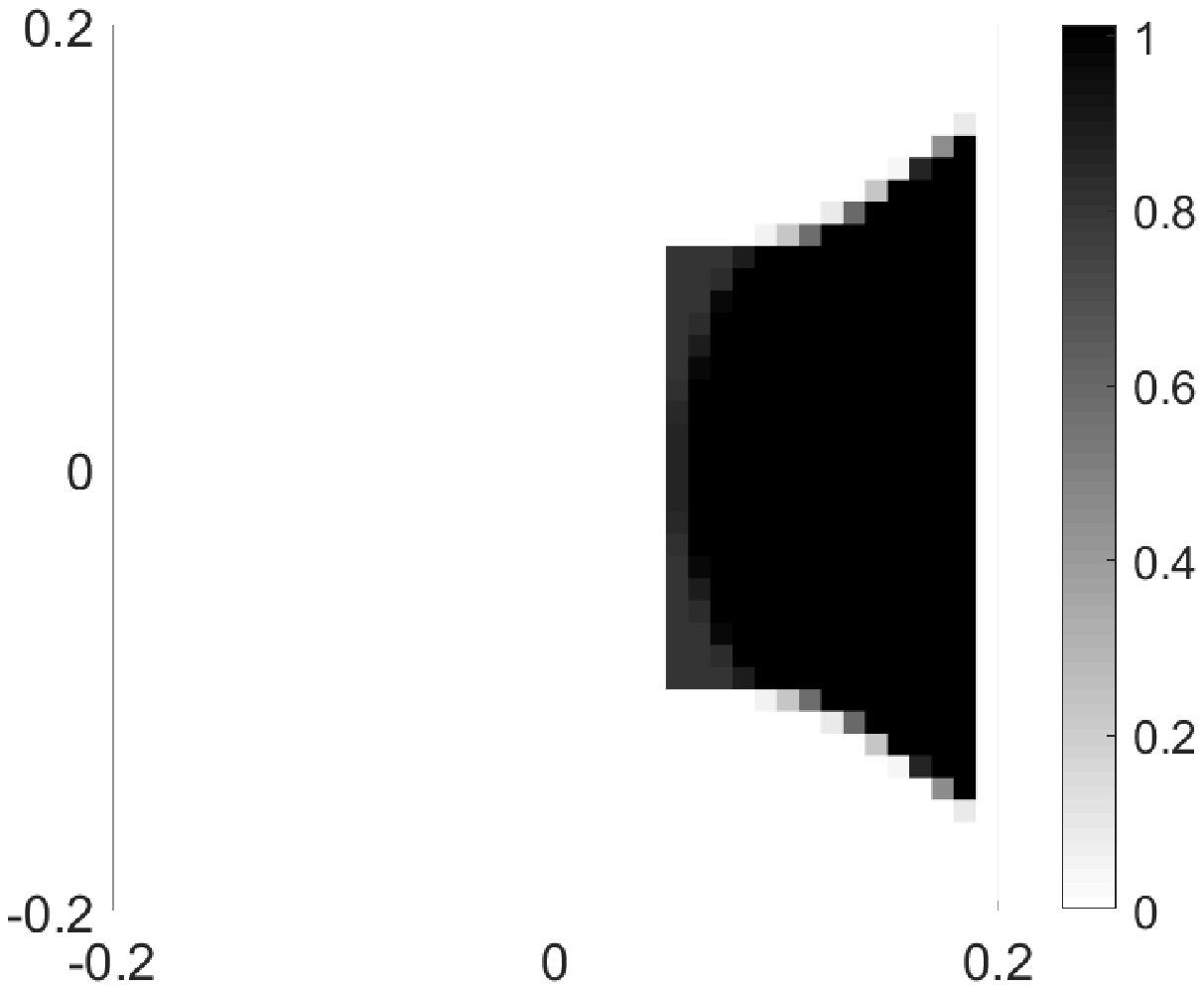}
			\caption{$k_1(\rho).$}
			\label{img:MAW_densityt0k15_1}
		\end{subfigure}
		\hfill
		\begin{subfigure}{0.45\textwidth}
			\includegraphics[width=\textwidth]{./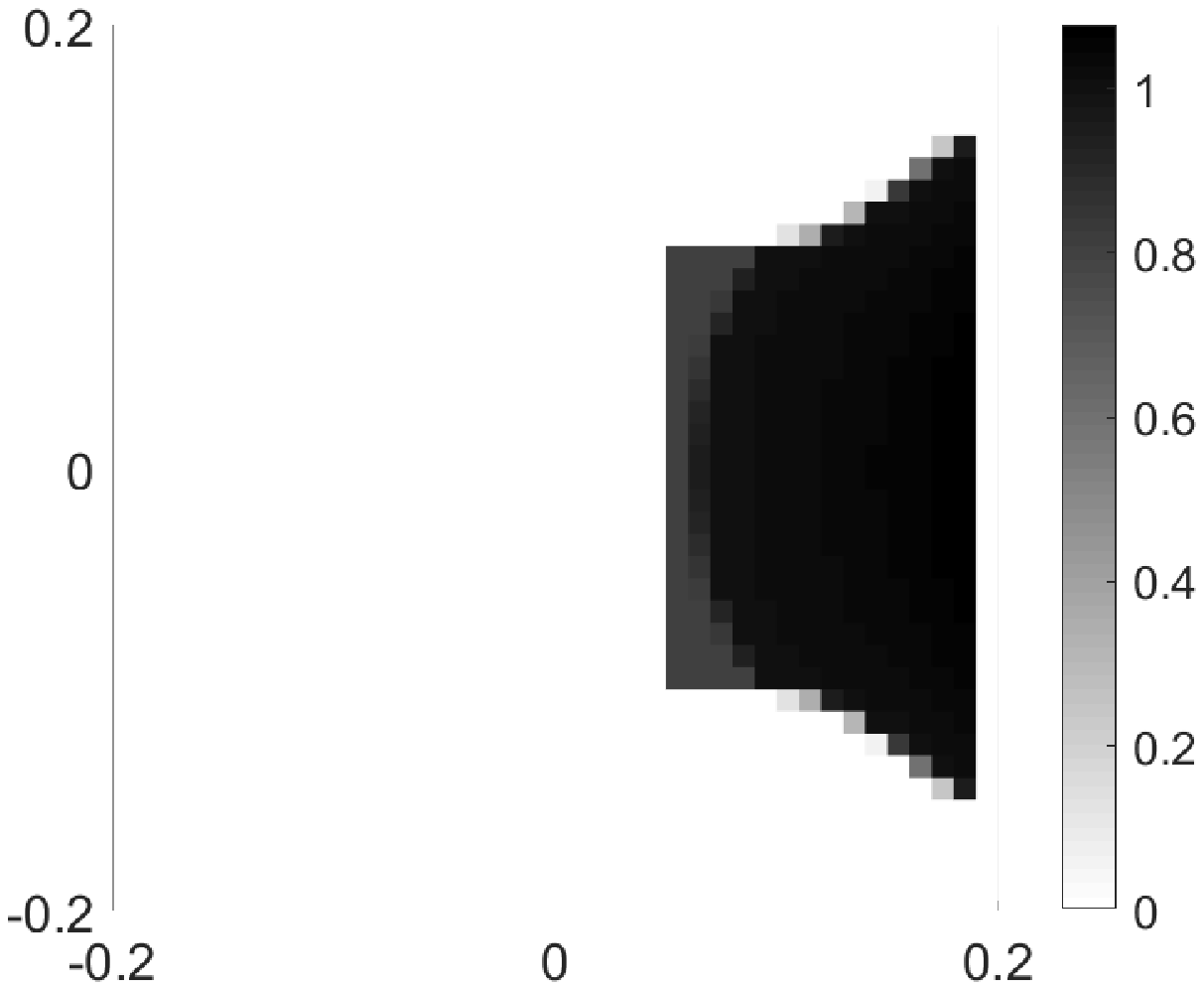}
			\caption{$k_2(\rho).$}
			\label{img:MAW_densityt0k15_2}
		\end{subfigure}
		\begin{subfigure}{0.45\textwidth}
			\includegraphics[width=\textwidth]{./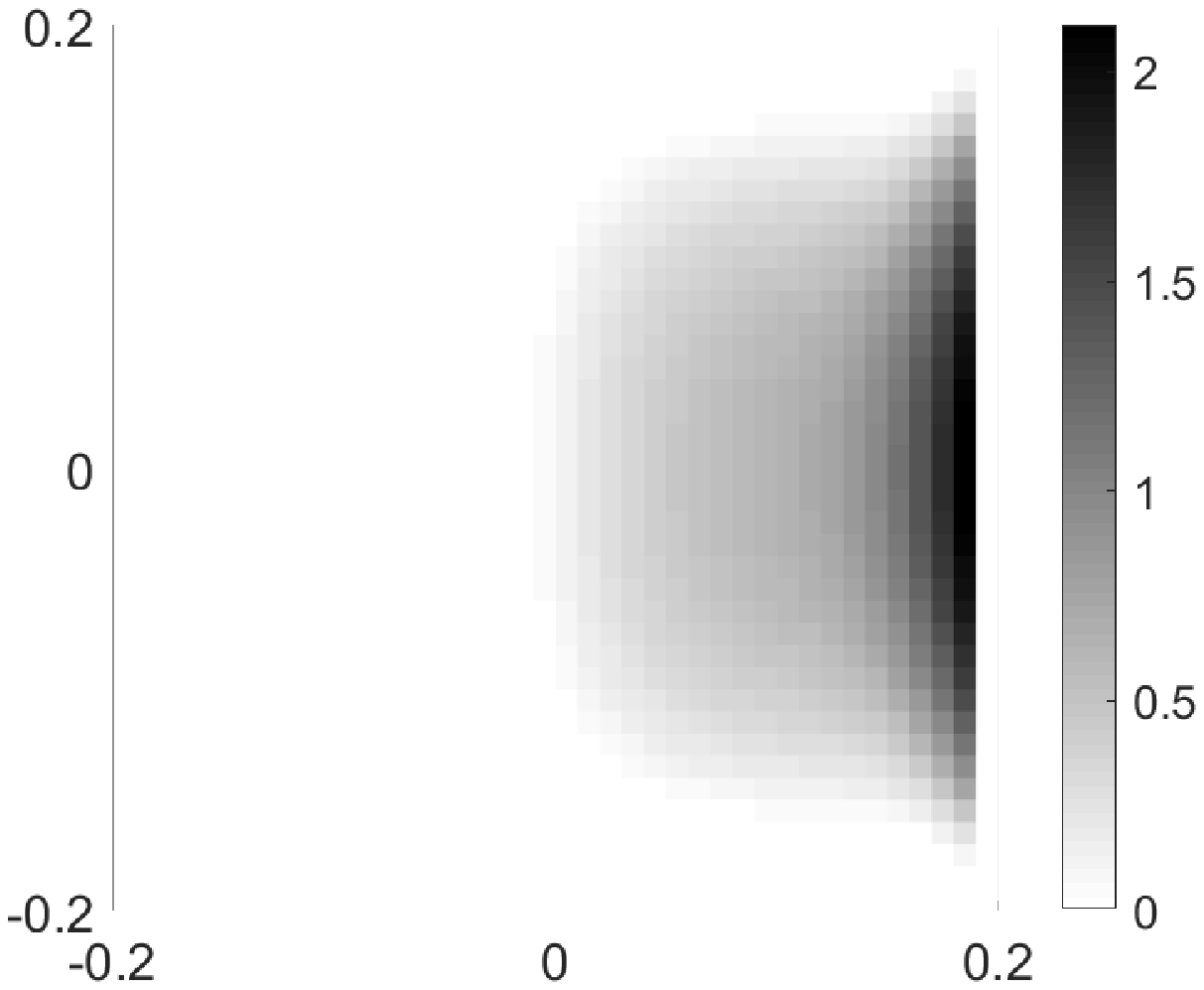}
			\caption{$k_3(\rho).$}
			\label{img:MAW_densityt0k15_3}
		\end{subfigure}
		\hfill
		\begin{subfigure}{0.45\textwidth}
			\includegraphics[width=\textwidth]{./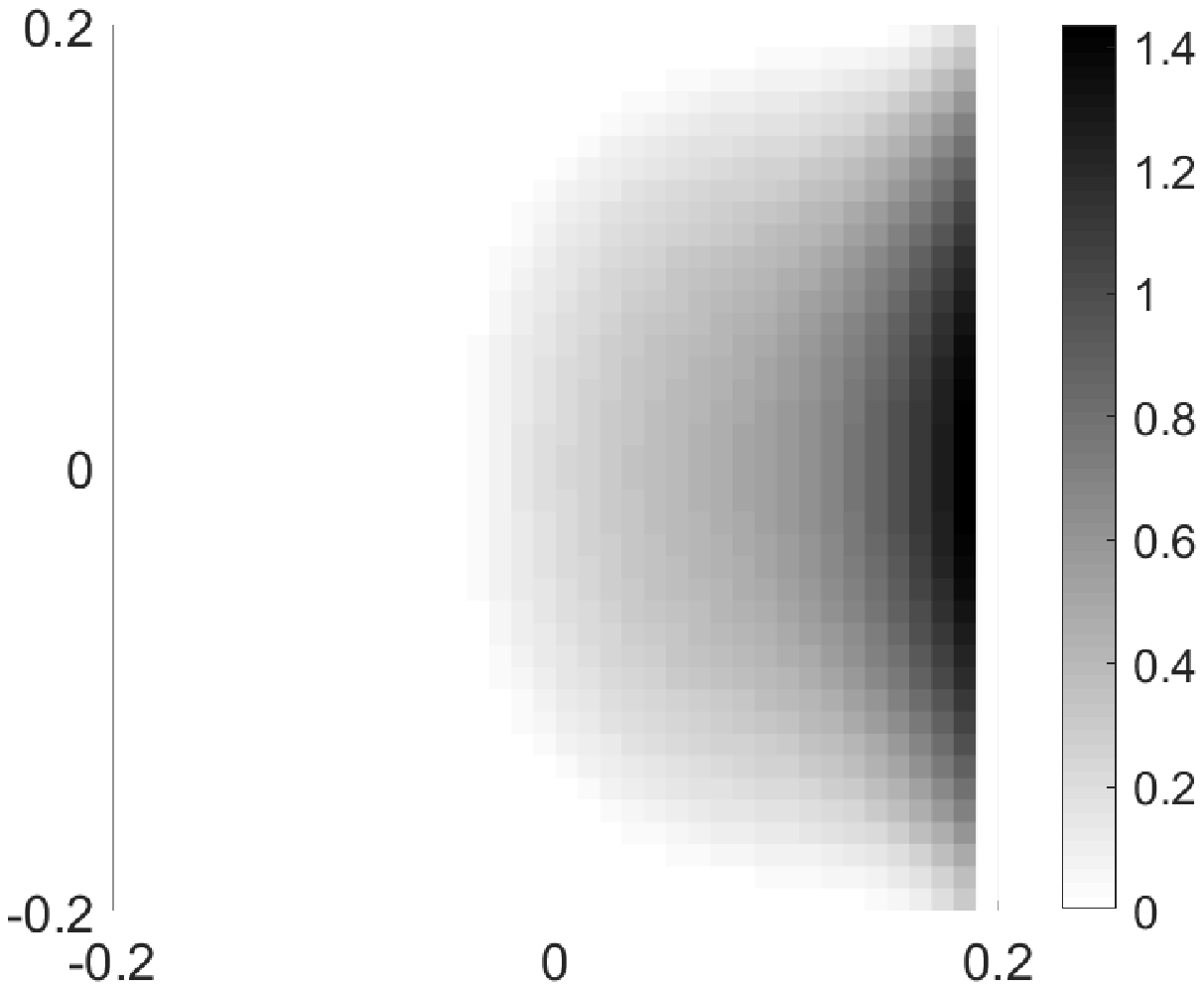}
			\caption{$k_4(\rho).$}
			\label{img:MAW_densityt0k15_4}
		\end{subfigure}
		\caption{Density plots of the solutions at $t=0.15$ for different diffusion coefficients $k(\rho)$.}
		\label{img:MAW_densityt0k15}
	\end{figure}

	\subsubsection{Comparison to a non-local model and experimental data}\label{sec:non-local}
	
	In contrast to the local models we have presented so far,
	non-local models have received an increasing interest 
	over the past decades to mimic transport phenomena in bounded domains such as crowd motion~\cite{ColRos2019} or material flow~\cite{GoeHohSch2014}. Therefore, we recall a non-local version of the degenerate equation~\eqref{eq:continuumlimit_materialflow_csupport}, see~\cite{GoeHohSch2014}. We consider a kernel function $\eta$ and define a non-local model for the density $\rho$ in the following way:
	
	\begin{align}
		\partial_t \rho + \nabla_x \cdot \left (\rho \left( v(x) - \frac{\overline{C}  \, \left( \eta * \nabla \rho \right) \, H(\rho - \rhomax)}{\gamma_b}\right)\right) &= 0. \label{eq:macrolimit_materialflow_totallynonlocal}
	\end{align}
	
	The asterisk $*$ denotes the spatial convolution. At an arbitrary point $x$ in space, we have to consider the non-local gradient $\nabla \left( \eta * \rho \right) (x)$. If we use a mollifier $\eta$ for the non-local gradient, define $\epsilon = \frac{\overline{C} }{\gamma_b}$ and normalize the non-local gradient, we obtain the non-local macroscopic model~\cite{GoeHohSch2014} for material flow
	\begin{align}
		\partial_t \rho + \nabla_x \cdot \left(\rho \left( v(x) -  \epsilon \frac{\nabla \left(\eta * \rho \right)}{\sqrt{1 + \norm{\nabla \left(\eta * \rho \right)}_2^2}}  \, H(\rho - \rhomax) \right )\right) &=0, \label{eq:materialflownonlocal}
	\end{align}
	as a special case of equation~\eqref{eq:macrolimit_materialflow_totallynonlocal}.
	The non-local model~\eqref{eq:materialflownonlocal} has been proven to accurately describe the dynamics in good agreement with experimental data~\cite{GoeHohSch2014,PriKoeGoe2019,RosWeiGoa2019}. For a comparison of the (local) advection-diffusion equation~\eqref{eq:continuumlimit_materialflow_csupport} to the non-local model~\eqref{eq:materialflownonlocal}, we study the experiment of material flow on a conveyor belt $\Omega \in \mathbb{R}^2$, cf.~\cite{GoeHohSch2014,RosWeiGoa2019}. The experiment consists of $N=192$ cylindrical parts which are transported on a conveyor belt and redirected by a deflector with angle $\nu$, see Figure~\ref{img_experimentaldata}. The conveyor belt is modeled with the time-independent velocity field $v(x)$ representing the transport along the belt and the deflector, see \cite{GoeHohSch2014} for a full description. Details for the simulation of \eqref{eq:materialflownonlocal} are discussed in the Appendix \ref{app:non-local}. 
	
	The different modeling approaches are compared in Figure~\ref{img:densitycomparison_realdata}. Obviously, a good agreement of the approximate density for the advection-diffusion equation with both, the experimental data and the solution to the non-local model, is achieved. Figure~\ref{img:maxdens_realdata} depicts the maximum value of the density $\max_{x \in \Omega} \rho(x,t)$. The critical density $\rhomax = 1$ is achieved with the advection-diffusion equation ($H_{\xi,1}$) when congestion at the deflector occurs, while the non-local equation and the advection-diffusion equation with $H_{tan}$ smear out the density profile and the critical density is not reached at any time.  
	
	\begin{figure}[htb]
		\centering
		\begin{minipage}[c]{0.45\textwidth}
			\includegraphics[width=1\textwidth]{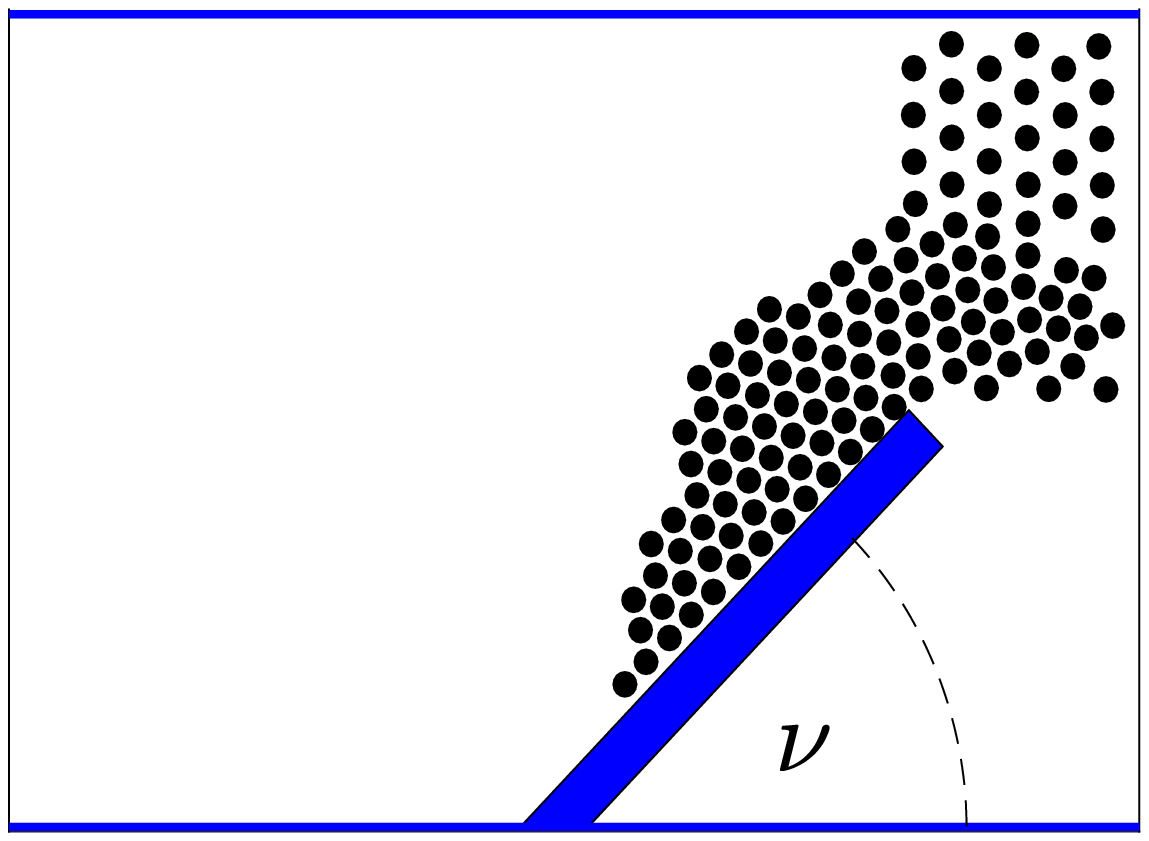}
			\subcaption{Experimental data.}
			\label{img_experimentaldata}
		\end{minipage}
		\begin{minipage}[c]{0.45\textwidth}
			\includegraphics[width=1\textwidth]{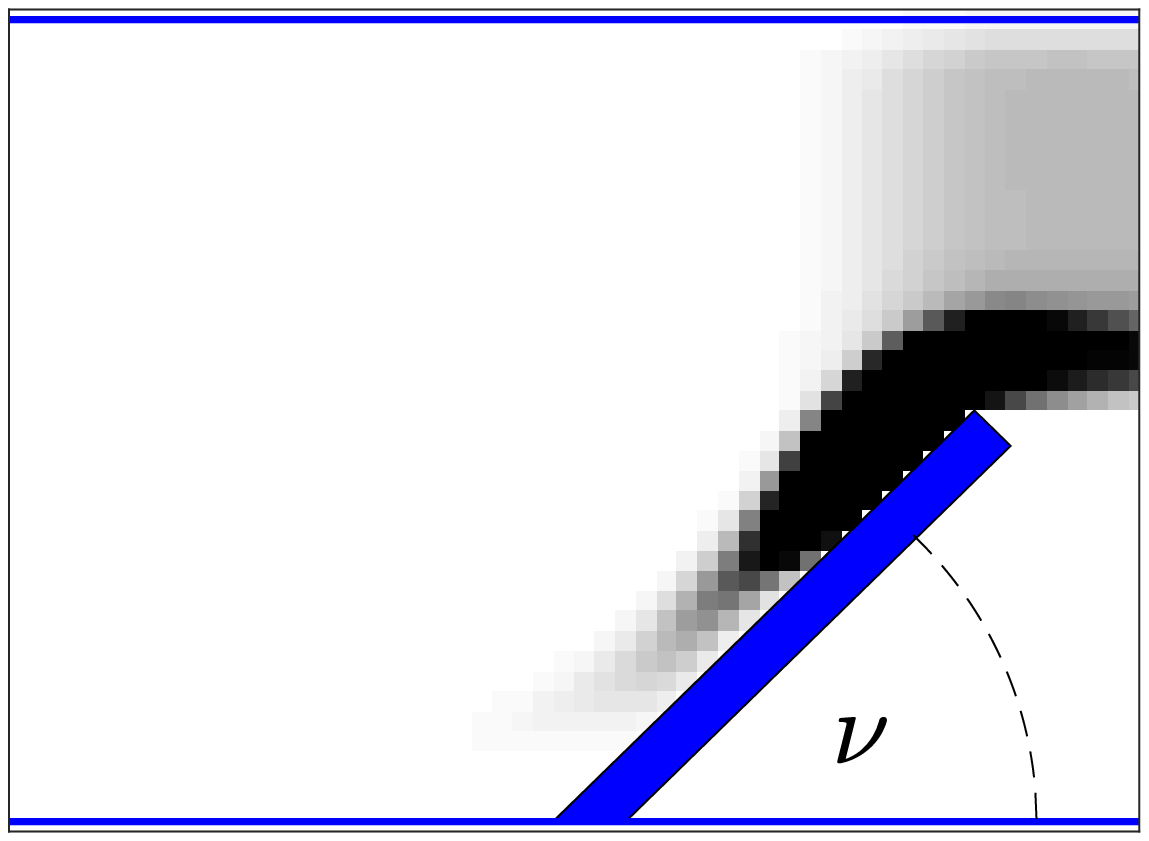}
			\subcaption{Advection-Diffusion ($H_{\xi,1}$) .}
			\label{img:density_ad_realdata}
		\end{minipage}
		\begin{minipage}[c]{0.45\textwidth}
			\includegraphics[width=1\textwidth]{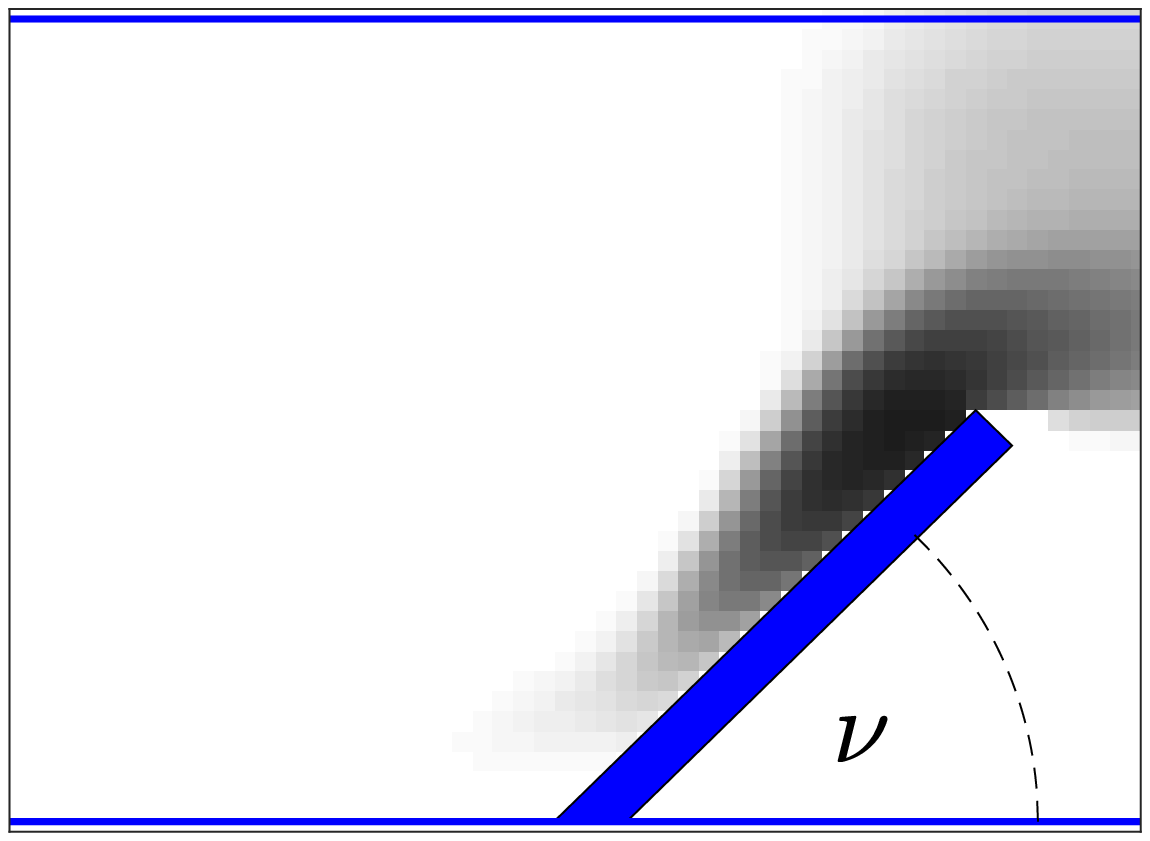}
			\subcaption{Advection-Diffusion ($H_{tan}$) .}
			\label{img:density_ad_realdata_Htan}
		\end{minipage}
		\begin{minipage}[c]{0.45\textwidth}
			\includegraphics[width=1\textwidth]{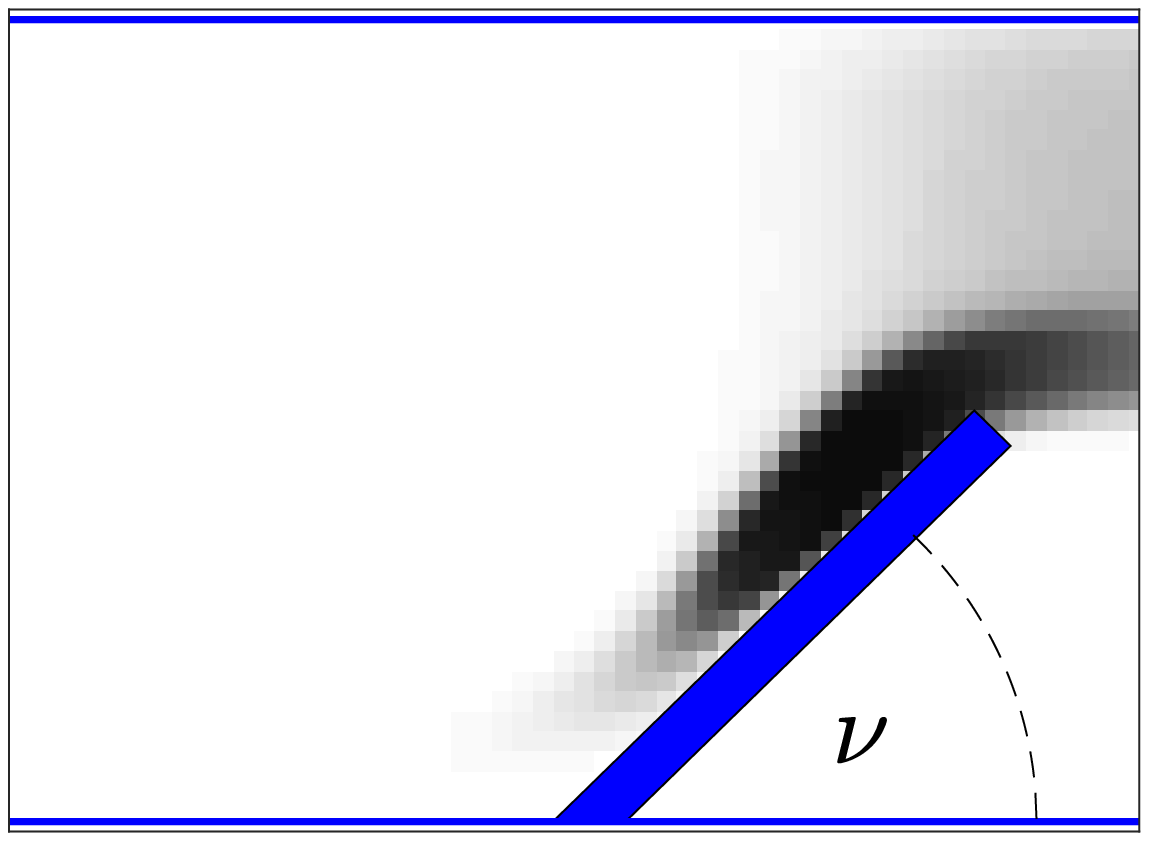}
			\subcaption{Non-Local ($H_{tan}$).}
			\label{img:density_roe_realdata}
		\end{minipage}
		\caption{Real data and density plots of the solutions $t=1.5$s.}
		\label{img:densitycomparison_realdata}
	\end{figure}

	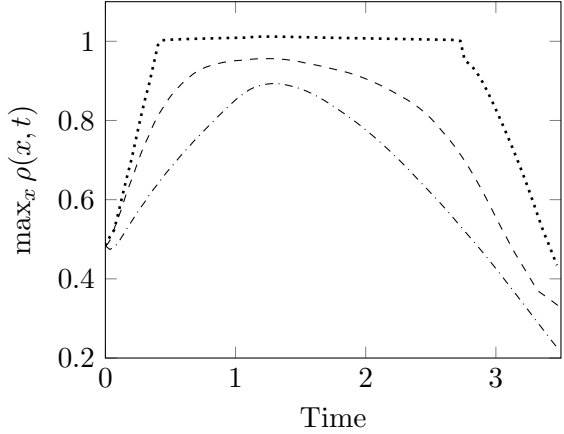
\begin{figure}[htb]
		\centering 
		\begin{tikzpicture}
			
			\begin{axis}[height=2.5cm, axis lines=none, legend columns=3, xmin=0, xmax=1, ymin=0, ymax=1, legend style={ font=\footnotesize, legend cell align=left, align = center, draw=white!15!black}] 
				\addplot[black,dotted, line width=1.0pt] coordinates {(-1,-1)};
				\addplot[black,dashdotted, line width=1.0pt] coordinates {(-1,-1)};
				\addplot[black,dashed] coordinates {(-1,-1)};
				\addplot[black,dotted,line width=1.0pt] coordinates {(-1,-1)};
				\addlegendentry{ \footnotesize Advection-Diffusion  ($H_{\xi,1}$) };
				\addlegendentry{ \footnotesize Advection-Diffusion  ($H_{tan}$) };
				\addlegendentry{ Non-Local ($H_{tan}$)};
			\end{axis}
		\end{tikzpicture}
		
		\vspace{5pt}
		\begin{subfigure}[b]{0.45\textwidth}
%
%
\begin{tikzpicture}
\setlength\fwidth{0.9\textwidth}
\begin{axis}[%
width=0.951\fwidth,
height=0.75\fwidth,
at={(0\fwidth,0\fwidth)},
scale only axis,
xmin=0,
xmax=3.5,
xlabel={Time},
ymin=0.2,
ymax=1.1,
ylabel={$\max_x \rho(x,t)$},
axis background/.style={fill=white},
legend style={at={(0.5,0.03)}, anchor=south, legend cell align=left, align=left, draw=white!10!black}
]
\addplot [color=black, dashed]
  table[row sep=crcr]{%
0	0.483520026326851\\
0.0104122295868616	0.486604264502057\\
0.021771025499802	0.489960509371458\\
0.0421222015104861	0.501297240721817\\
0.0449619004887212	0.504073126577189\\
0.0634199438472489	0.522140232950448\\
0.0809314208796983	0.538413611152928\\
0.0989161810751868	0.554281116502608\\
0.11974064024891	0.572586642833923\\
0.132992568814007	0.586815367504904\\
0.14813763003126	0.602154281851773\\
0.164702540737631	0.618052373894778\\
0.18268730093312	0.634451414304732\\
0.201145344291648	0.650459781441451\\
0.221969803465371	0.668448357103563\\
0.235695015193507	0.682223017306189\\
0.250366793247721	0.696092959373727\\
0.265985137628014	0.710002869163185\\
0.281603482008307	0.723105202115685\\
0.297695109551638	0.735811934622613\\
0.314260020258009	0.748098808959521\\
0.33082493096438	0.75962740894059\\
0.336977612083889	0.764126232705882\\
0.349756257485947	0.774690835455036\\
0.363008186051044	0.784845382444376\\
0.376260114616141	0.79423719283867\\
0.389985326344277	0.803221320603453\\
0.404183821235452	0.811791926767165\\
0.419328882452705	0.82021467939658\\
0.435420509996037	0.828462727732209\\
0.460504517637113	0.841118621635821\\
0.47375644620221	0.848278954105135\\
0.487954941093385	0.855252729113928\\
0.503573285473678	0.862248928330471\\
0.521084762506127	0.869414356656123\\
0.540016089027694	0.876483111473042\\
0.559893981875339	0.883239814044838\\
0.568886361973083	0.886444149118209\\
0.583558140027297	0.891705189616042\\
0.599649767570629	0.896809505207135\\
0.618107810929157	0.902007635923208\\
0.65076434917886	0.910988792383517\\
0.666382693559152	0.91535021202605\\
0.683420887428563	0.919456830770255\\
0.701405647624051	0.923161531309138\\
0.720810257308657	0.926536422056356\\
0.738795017504145	0.929692331341442\\
0.758199627188751	0.932819820876346\\
0.779024086362475	0.93555982262805\\
0.801741678188355	0.937932333555044\\
0.813573757264334	0.939321906768503\\
0.836291349090215	0.9416111265873\\
0.861375356731291	0.943515135489271\\
0.913436504665599	0.947183171255666\\
0.94183349444795	0.948906995613644\\
0.974963315860692	0.950292693652585\\
1.02749774695804	0.951752188506393\\
1.07151308112068	0.952674164390112\\
1.08239859387058	0.953133885032892\\
1.12499407854411	0.954820811712508\\
1.16522314740244	0.955802945380364\\
1.20545221626077	0.956173032748357\\
1.30200198152076	0.956049945722365\\
1.34980358098771	0.954976960581968\\
1.39949831310683	0.95325102047754\\
1.4487197620629	0.950937780757088\\
1.4969946446929	0.948057280842855\\
1.54479624415985	0.944595274432217\\
1.56656726965966	0.943180563160859\\
1.62288796606132	0.939703091873277\\
1.68062851195209	0.93552205476207\\
1.85668984860267	0.921542127308624\\
1.90496473123266	0.916711640167749\\
1.95276633069962	0.91128835436818\\
2.00151449649265	0.905142264160801\\
2.0516825117748	0.89820279189011\\
2.10421694287215	0.890317685830255\\
2.15533152448038	0.882034416165149\\
2.20076670813214	0.874074461979628\\
2.32098063154409	0.851808415113207\\
2.35600358560899	0.844465467510784\\
2.38771355753262	0.837206268459193\\
2.41753039680408	0.829761339749295\\
2.44592738658643	0.822041075484351\\
2.47290452687967	0.814079846569801\\
2.49893510084682	0.805771814425786\\
2.52449239165094	0.79697735036735\\
2.54957639929201	0.787699077353026\\
2.57418712377005	0.777947203139035\\
2.59217188396554	0.770890794052885\\
2.61962230742181	0.760672809003455\\
2.64707273087808	0.749797528696214\\
2.67404987117131	0.738459315525792\\
2.70150029462759	0.726256157972259\\
2.72895071808386	0.713383520799844\\
2.75687442470317	0.699611669150467\\
2.78527141448552	0.684923727190555\\
2.81461497059395	0.669050823236742\\
2.84490509302845	0.651964208166978\\
2.87661506495208	0.633367797956528\\
2.91021816952786	0.612946248132394\\
2.94666097308188	0.590071977806278\\
2.98736332510324	0.563783089404626\\
3.04179088885275	0.528538180632546\\
3.12792842452588	0.475741931946023\\
3.14118035309098	0.467948629522144\\
3.32055467188282	0.370030479527736\\
3.38539446521919	0.354971064988086\\
3.45023425855555	0.339283737622398\\
3.49992899067467	0.326844811404945\\
};

\addplot [color=black, dotted,line width=1.0pt]
  table[row sep=crcr]{%
0	0.483520026326851\\
0.0336717514850737	0.50610642895934\\
0.0673435029701475	0.523343589508454\\
0.101015254455221	0.567982803990923\\
0.134687005940295	0.615164361155777\\
0.168358757425369	0.660673271694228\\
0.202030508910442	0.703588384795174\\
0.269374011880589	0.804625805495188\\
0.303045763365663	0.851050647556623\\
0.336717514850737	0.892894845756955\\
0.404061017820884	0.990809381485593\\
0.437732769305958	1.00257194745422\\
0.505076272276106	1.00418343771242\\
0.673435029701474	1.00621683329269\\
0.808122035641769	1.00728074891758\\
0.90913729009699	1.00818520064903\\
1.07749604752236	1.00926037587247\\
1.11116779900743	1.01007850682309\\
1.14483955049251	1.01139499599977\\
1.21218305346265	1.01187501805418\\
1.31319830791787	1.01175370773069\\
1.48155706534324	1.01072211264269\\
1.68358757425368	1.00921304419037\\
2.66006836732082	1.00358356090982\\
2.6937401188059	1.00292029830447\\
2.72741187029097	1.00072138977364\\
2.76108362177604	0.951185503389304\\
2.79475537326112	0.943365753855244\\
2.82842712474619	0.931754623080617\\
2.86209887623126	0.916415504843685\\
2.89577062771634	0.897882712005315\\
2.92944237920141	0.876884524525984\\
2.96311413068649	0.854121950241635\\
2.99678588217156	0.83016364314554\\
3.03045763365663	0.805412610726459\\
3.06412938514171	0.780091188803888\\
3.09780113662678	0.754244560410453\\
3.13147288811185	0.727777660611414\\
3.16514463959693	0.700518594512651\\
3.198816391082	0.672295217156005\\
3.23248814256708	0.643001079194012\\
3.26615989405215	0.612627312881441\\
3.29983164553722	0.58126023203558\\
3.3335033970223	0.549065191309357\\
3.36717514850737	0.517726239908837\\
3.40084689999244	0.491363526705092\\
3.43451865147752	0.463411207507343\\
3.46819040296259	0.434380936341711\\
};

\addplot [color=black, dashdotted]
table[row sep=crcr]{%
	0	0.483520026326851\\
	0.0336717514850737	0.474403537020665\\
	0.0673435029701475	0.479214521829266\\
	0.101015254455221	0.492654061702843\\
	0.134687005940295	0.510072262789374\\
	0.168358757425369	0.528806594499292\\
	0.202030508910442	0.545225044645564\\
	0.235702260395516	0.562673885577584\\
	0.269374011880589	0.579142651106948\\
	0.303045763365663	0.594326324506714\\
	0.336717514850737	0.611467580145321\\
	0.370389266335811	0.626538483287752\\
	0.404061017820884	0.639929052528486\\
	0.437732769305958	0.655010703833062\\
	0.471404520791032	0.669201953734046\\
	0.538748023761179	0.69502900750537\\
	0.572419775246253	0.708587888681238\\
	0.606091526731327	0.721012701166304\\
	0.6397632782164	0.732512349542889\\
	0.673435029701474	0.745444787416865\\
	0.707106781186547	0.757294623120107\\
	0.740778532671621	0.768099654385573\\
	0.774450284156695	0.779914828201884\\
	0.808122035641769	0.790904256693116\\
	0.841793787126842	0.802575131929121\\
	0.942809041582064	0.834850947562293\\
	1.01015254455221	0.855004626824334\\
	1.07749604752236	0.871608225116459\\
	1.11116779900743	0.878462091966138\\
	1.14483955049251	0.883764783706209\\
	1.17851130197758	0.887636950183598\\
	1.21218305346265	0.890195258581931\\
	1.24585480494773	0.892026223980768\\
	1.2795265564328	0.893123706099361\\
	1.31319830791787	0.89316800494679\\
	1.34687005940295	0.892252555341746\\
	1.38054181088802	0.890455639272931\\
	1.44788531385817	0.885224203096438\\
	1.48155706534324	0.881975949478357\\
	1.51522881682832	0.878080335159967\\
	1.54890056831339	0.873580609145499\\
	1.58257231979846	0.868516799173972\\
	1.61624407128354	0.862926445488076\\
	1.64991582276861	0.856844932233157\\
	1.68358757425368	0.85030550268705\\
	1.71725932573876	0.843339059046846\\
	1.85194633167905	0.813644700385273\\
	1.9192898346492	0.797321449489489\\
	1.98663333761935	0.779946568057703\\
	2.0539768405895	0.761575997704151\\
	2.12132034355964	0.74222588324225\\
	2.18866384652979	0.721883634934939\\
	2.25600734949994	0.700518507116767\\
	2.32335085247009	0.678090549131207\\
	2.5590531128656	0.596415564815285\\
	2.62639661583575	0.571696401429044\\
	2.6937401188059	0.546020254743533\\
	2.72741187029097	0.532827456796406\\
	2.82842712474619	0.494611772760583\\
	2.89577062771634	0.46828186859042\\
	2.96311413068649	0.441185947436002\\
	3.03045763365663	0.413403368150947\\
	3.13147288811185	0.370694794023075\\
	3.43451865147752	0.241134150764409\\
	3.46819040296259	0.227313291281734\\
};

\end{axis}
\end{tikzpicture}%
		\end{subfigure}
		\caption{Maximum density over time.}
		\label{img:maxdens_realdata}
	\end{figure}

	
	\subsection{Numerical results for the swarming model} \label{sec:numerics_swarming}
	
	\subsubsection{Scattering a single swarm at a boundary}
	
	In~\cite{ArmMarTha2017} the collision of flocks with walls has been analyzed for the microscopic  attraction-repulsion model~\eqref{eq:attraction_repulsion}. Setting $F(x_i- x_j) =\lambda \nabla_{x_i} U(x_i -x_j)$  and varying $\lambda >0$ changes the relative strength of the damping and the swarming potential. Reflecting the individual's velocity specularly at the wall,  different reflection patterns for the swarm have been obtained. Specifically, reflection laws that show the outgoing angle of the flock as a function of the incoming angle $\theta^0$ for different scalings of the interaction potential $\lambda$ have been determined. 
	In the following, we study the boundary behaviour for swarms and for varying diffusion coefficients $C$ to validate the macroscopic model~\eqref{eq:macroscopicswarming_singleswarm}. 
	
	We write the initial velocity with initial heading $\theta^0$ as  $\overline{v}^0 = \left( \cos(\theta^0), \sin(\theta^0) \right)$  such that $\norm{\bar{v}^0} = 1$ and  set the initial density to $\rho_0 = \rhocrit\, \mathbbm{1}_{ \left( \norm{x}  \leq r \right)}, r>0,$ representing a flock  centered at $ x =(0,0)^T$. The boundary  is placed at $x^{(1)} = 0.2$ with $\vec{n} = (1,0)^T$ (see Fig.~\ref{img:Incomingvsoutgoingangle}). The critical density is set to $\rhocrit = 1$. More details of the simulation can be found in Appendix \ref{app:swarm}.

	Figure~\ref{img:SAW_initialconfiguration} shows the initial configuration of a flock which is directed towards the boundary with $\theta^0 = 45$.  We  introduce $\delta = C /v^{(1),0}$ to evaluate the boundary interaction of the swarm for different values of $\delta \geq 1$. Thus diffusion in regions where the critical density is exceeded increases with $\delta$.  The contour of the swarm density 
	after the boundary interaction (at $t=0.3$) is  obtained with Algorithm~\ref{alg:swarming_boundary} (Appendix \ref{app:swarm})  and shown in Figure~\ref{img:SAW_aftercollision}\footnote{The swarm profile slightly enlarges over time due to numerical diffusion.}. The arrows depict the direction of the flock after the interaction with the wall.

	\begin{figure}[htp]
		\begin{subfigure}{0.45\textwidth}
			
			\centering
			\begin{tikzpicture}
				
				\coordinate (O) at (0,0) ;
				\coordinate (A) at (0,2) ;
				\coordinate (B) at (0,-2) ;
				
				\node[right] at (0.2,0) {Wall};
				\node[left] at (-1,-1) {};
				
				\draw[] (A) -- (B) ;
				\draw[dash pattern=on5pt off3pt] (0,0) -- (-1.5,0) ;
				\draw[dash pattern=on5pt off3pt] (0,-1.44) -- (-1.5,-1.44) ;
				
				\draw[black,thick] (O) -- (135:2);
				\draw[black, thick] (O) -- (50:-1.93);
				
				\draw (-0.8,0.8) arc (135:185:1);
				\draw (-0.4,-1.44) arc (0:40:1) ;
				\node[] at (-0.7,-1.2)  {$\theta^{0}$};
				\node[] at (-0.6, 0.25)  {$\theta^{r}$};
			\end{tikzpicture}
			\caption{Incoming angle and reflection angle.}
			\label{img:Incomingvsoutgoingangle}
		\end{subfigure}
		\begin{subfigure}{0.45\textwidth}
			\includegraphics[width=\textwidth]{./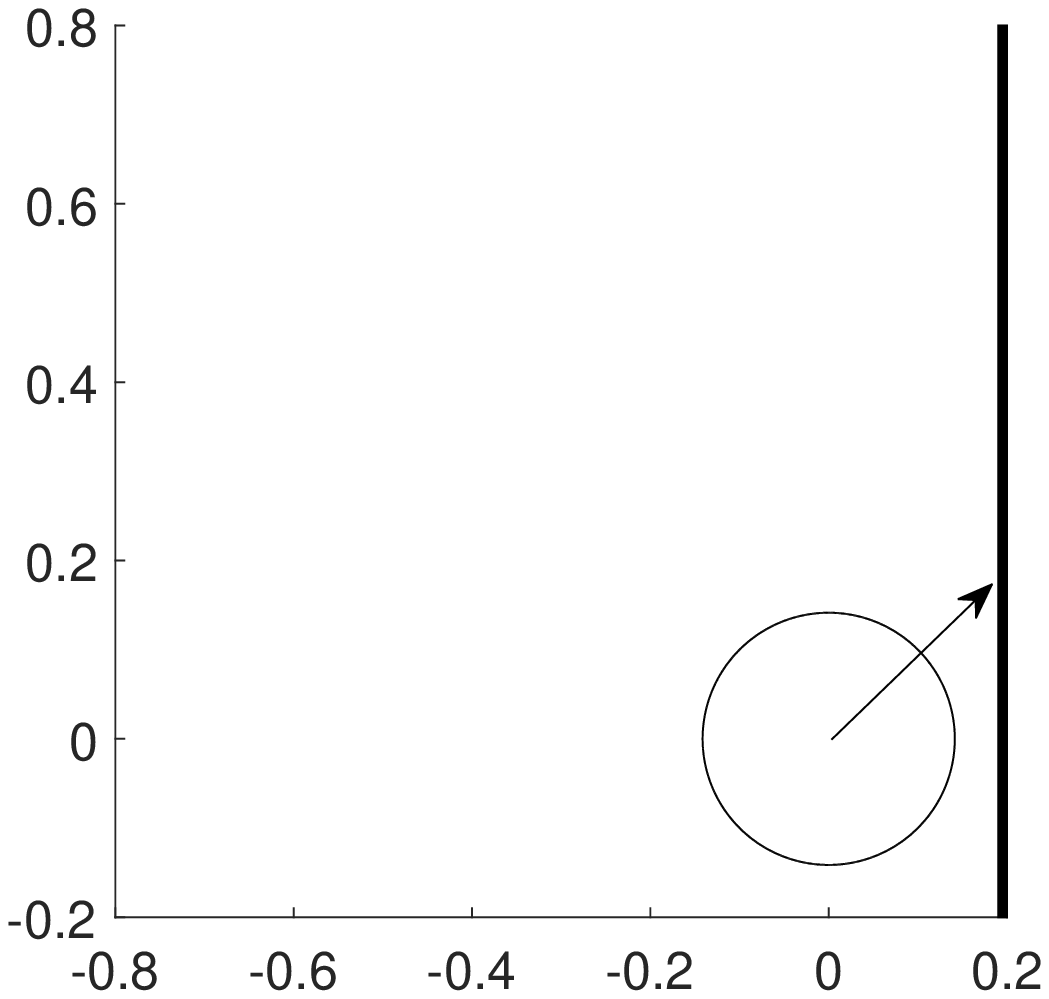}
			\caption{Initial configuration.}
			\label{img:SAW_initialconfiguration}
		\end{subfigure}\\
		\centering
		\begin{subfigure}{0.45\textwidth}
			\includegraphics[width=\textwidth]{./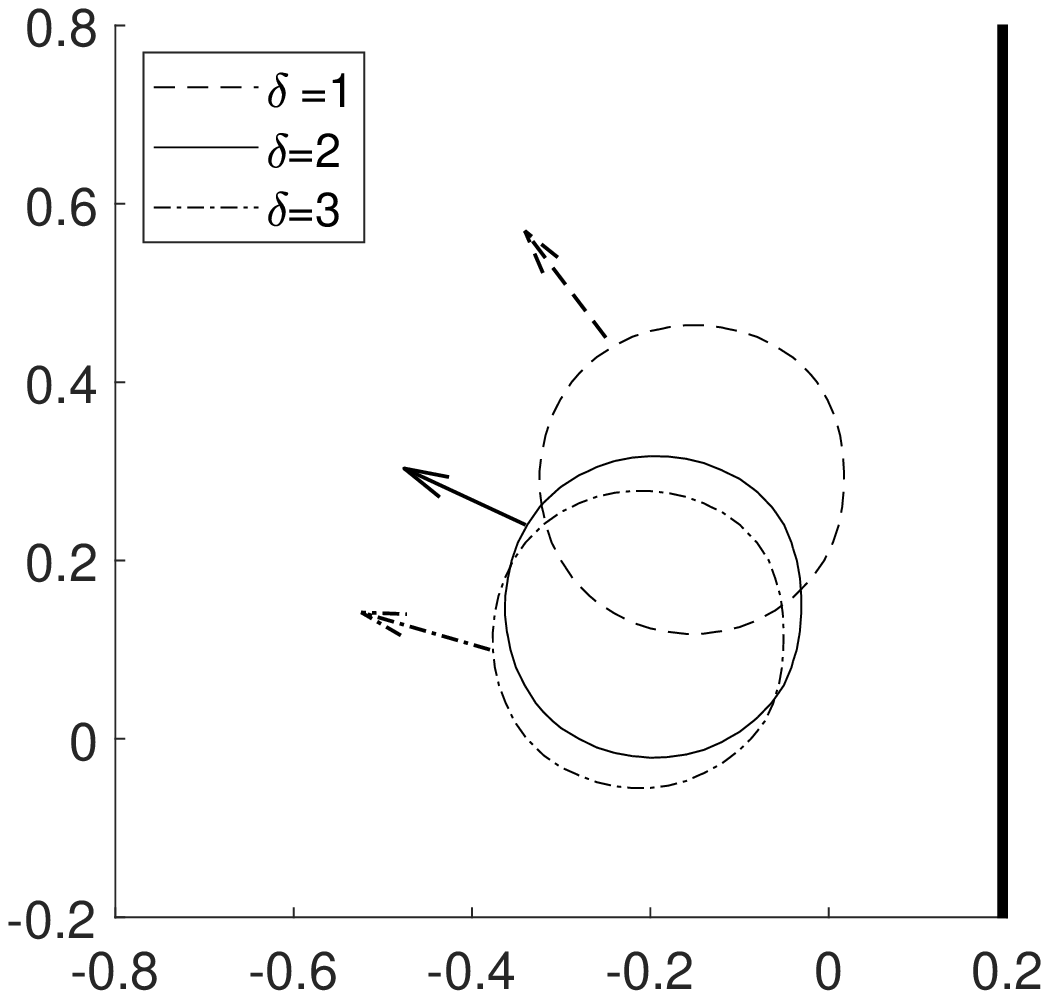}
			\caption{Swarm contours  after collision for different $\delta$.}
			\label{img:SAW_aftercollision}
		\end{subfigure}
		\caption{Collision of a swarm with a boundary.}
	\end{figure}

	The reflection angle $\theta^{r}$ for increasing values of $\delta$ and angles $\theta^0 \in \lbrace 30, 45, 60 \rbrace$ are displayed in Figure~\ref{img:SAW_304560deg_scaledelta}. We find that for increasing $\theta^0$, the reflection angle $\theta^r$ increases,  matching the microscopic observations in the attraction-repulsion model~\cite{ArmMarTha2017}. For small diffusion, the swarm aligns with the wall. Increasing diffusion, the outgoing angles decrease  and the swarms are reflected away from the wall. The asterisk in Figure~\ref{img:SAW_304560deg_scaledelta} marks the diffusion constant for which the flock as a whole reflects specularly, i.e.  $\theta^0 = \theta^r$. The results of Table~\ref{tab:SAW_aftercollison} are marked with a circle in Figure~\ref{img:SAW_304560deg_scaledelta}.
	
	\begin{figure}[htp]
		\centering
		\includegraphics[width=0.65\textwidth]{./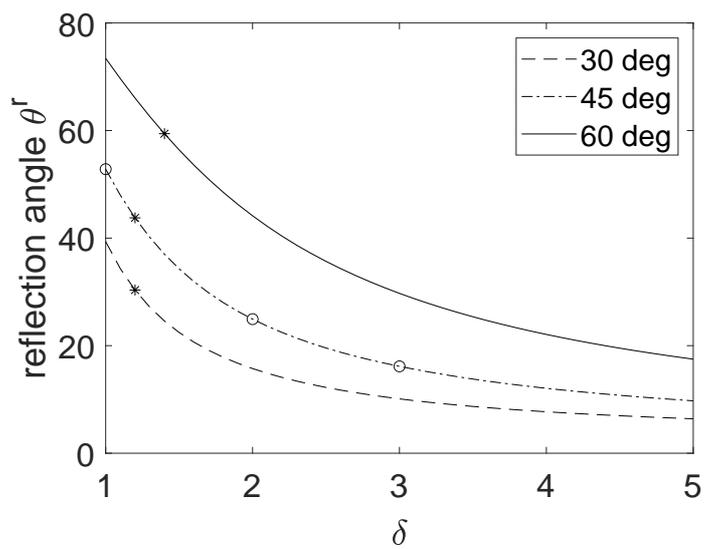}
		\caption{Reflection angle for $\theta^0=$ 30, 45 and 60 deg.}
		\label{img:SAW_304560deg_scaledelta}
	\end{figure}
	
	\begin{table}[htp]
		\centering
		\caption{Reflection angle of the experiments in Figure~\ref{img:SAW_aftercollision}.}
		\label{tab:SAW_aftercollison}
		\begin{tabular}{c | c c c  }
			$\delta$ &	1 & 2 & 3 \\
			\hline
			$\theta^r$ & 52.81 & 24.93 & 16.15
		\end{tabular}
	\end{table}

	\subsubsection{Collision of several swarms}

	We consider two swarms described by the density vector $\boldsymbol{\rho} = (\rho_1,  \rho_2)$, where the density of swarm $i$ is given by $\rho_i(x,t), ~i = 1,2$.  Assuming that each swarm $i$  has an identity and adjusts its velocity according to the mean velocity of its members, while diffusion is driven by the total density taking into consideration crowding from the other swarm, we obtain the system of non-linear advection-diffusion equations
	\begin{equation*}
		\partial_t \rho_i + \nabla \cdot \left(\rho_i \overline{v_i} - \rho_i C \, H\left( \sum_{i=1}^{2} \rho_i   > \rhocrit \right) \nabla \left(\sum_{i=1}^{2} \rho_i \right)  \right) = 0,
	\end{equation*}
	in two space dimensions. Let 
	\begin{align*}
		f_i(\rho) = \rho \bar{v}_i, \qquad g_i(\boldsymbol{\rho}) = \rho_i C \mathbbm{1}_{ \left( \Phi(\boldsymbol{\rho})   > \rhocrit \right)}, \qquad \Phi(\boldsymbol{\rho}) = \sum_{z=i}^{2} \rho_i.
	\end{align*}
	We can reformulate this equation in the more general form as  
	
	\begin{equation} \label{eq:swarming_multiclass}
		\partial_t \rho_i + \nabla \cdot \left( f_i(\rho_i)- g_i(\boldsymbol{\rho}) \nabla \Phi(\boldsymbol{\rho})  \right) = 0,
	\end{equation}
	frequently used in batch settling and sedimentation processes~\cite{BerBueKar2003, BueDieCar2017}. \\
	
	We consider the collision of two swarms heading in opposite directions  discussed in~\cite{ArmMarTha2017} (see Figure~\ref{img:collisionofswarms_initialpos}). The initial velocities are $\overline{v_L}^0 = \left( \cos(45), \sin(45) \right)^T$ and $\overline{v_R}^0 = \left( -\cos(45), \sin(45) \right)^T$ for the left and the right swarm, respectively. We set $\rhocrit =1$. Initially, the swarm densities are set to $\rho_{L}^0 = \rho_{R}^0= 0.8~\rhocrit$, such that diffusion is activated when the swarms collide. Note that for $\rho^0 <0.5$, the two swarms just pass each other since their cumulated density is always below $\rhocrit$ in agreement with 
	~\cite{ArmMarTha2017}.

	Using Algorithm~\ref{alg:swarming_boundary}  (Appendix \ref{app:swarm}) with $\dx^{(1)} =\dx^{(2)} = 5 \cdot 10^{-2}$,  we compute the approximate solution to equation~\eqref{eq:swarming_multiclass}. 
	Initial positioning and simulation results are displayed in Figure~\ref{img:collisionofswarms}. For the lower diffusion coefficient $C=0.1$, we observe that the two swarms merge, interact with each other and after the interaction continue on separate paths. After the interaction, the density inside of the swarms is $\einhalb \rhocrit$. In contrast, for the diffusion coefficient $C=2$, the two swarms are still merged\footnote{We observe a small drift of the density in Figure~\ref{img:collision_C2} to the left because we compute first the left swarm and afterwards the right swarm in each time step.}. The density inside the new swarm is approximately $\rhocrit$. 
	
	\begin{figure}[htp] 
		\centering
		\begin{subfigure}{0.32\textwidth}
			\centering
			\includegraphics[width=1\textwidth]{./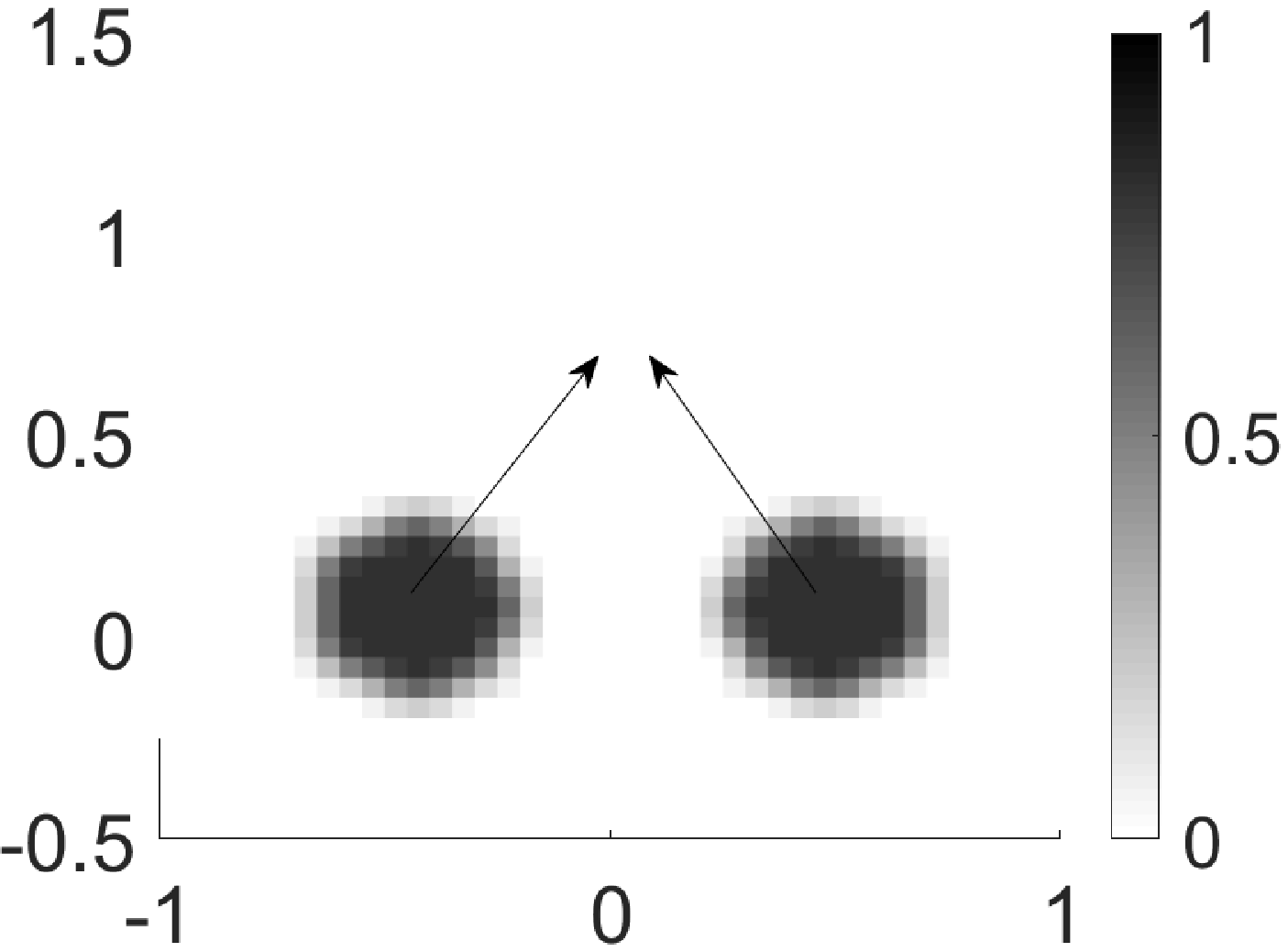}
			\subcaption{Initial positions.}
			\label{img:collisionofswarms_initialpos}
		\end{subfigure}
		\begin{subfigure}{0.32\textwidth}
			\centering
			\includegraphics[width=1\textwidth]{./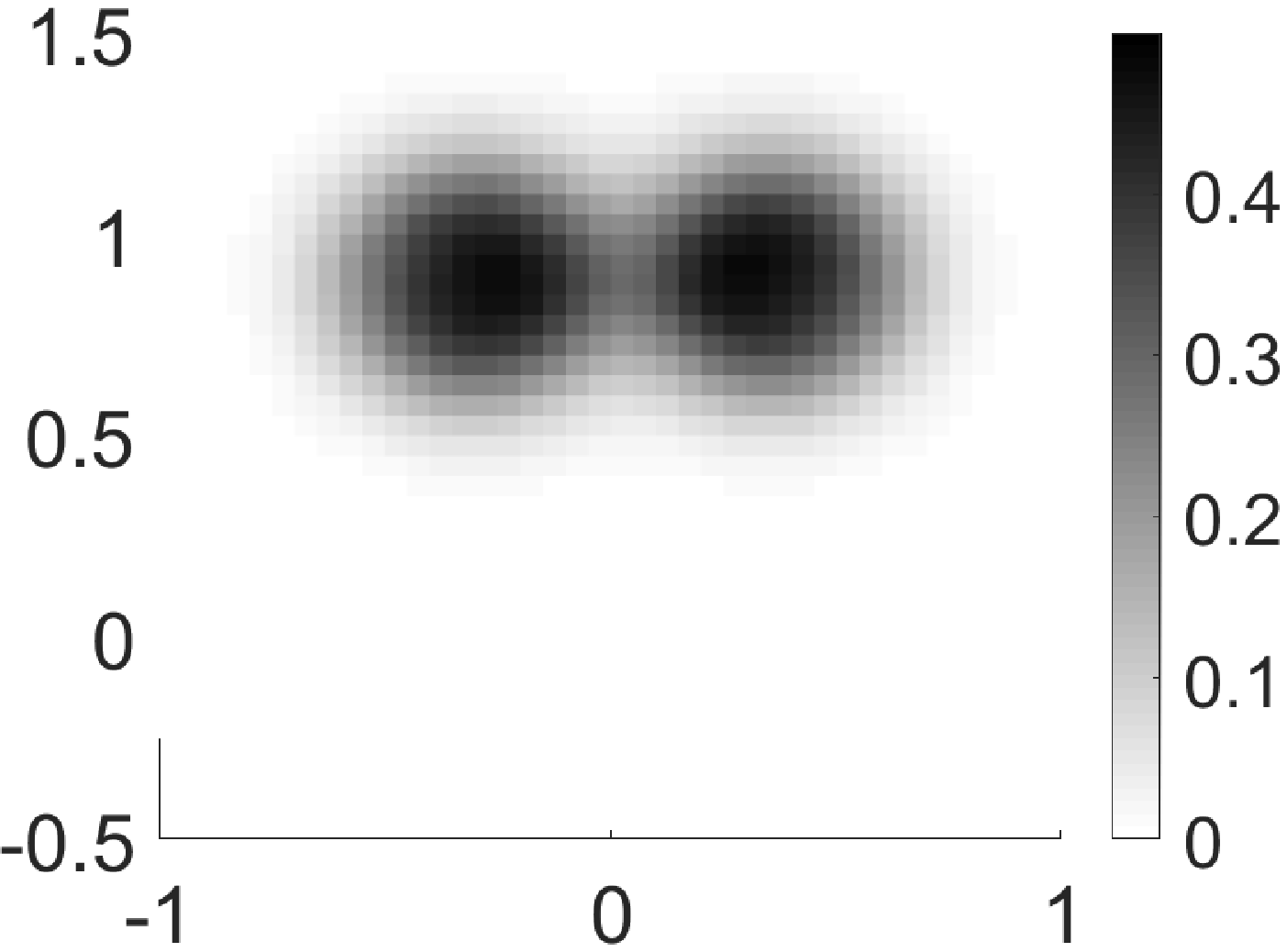}
			\subcaption{$C =0.1$ at $t = 1.2$.}
		\end{subfigure}
		\begin{subfigure}{0.32\textwidth}
			\centering
			\includegraphics[width=1\textwidth]{./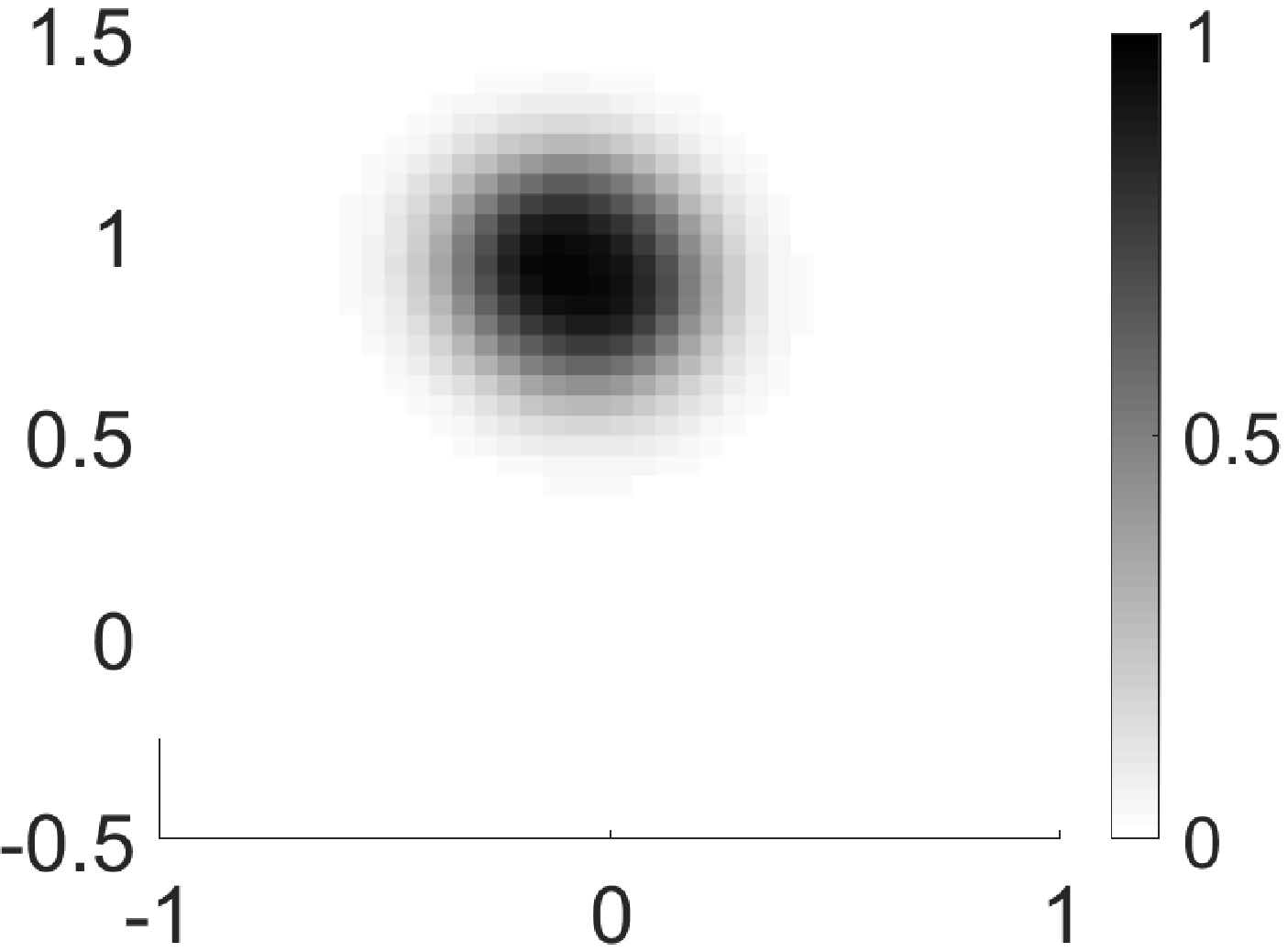}
			\subcaption{$C = 2$ at $t = 1.2$.}
			\label{img:collision_C2}
		\end{subfigure}
		\caption{Collision of swarms.}
		\label{img:collisionofswarms}
	\end{figure}
	
	
	\section{Conclusion}\label{sec:conclusion}
	
	We have shown how locally repelling forces modeling geometric exclusion principles at the microscopic level lead to a discontinuous 
	advection-diffusion model at the macroscopic limit for very generic setups of interacting particle flows. 
	This closes an important gap in the theoretical program to develop macroscopic models that describe emergent phenomena of large ensembles of interacting particles as limits of their microscopic behavior. Specifically, such geometric exclusion applies to boundaries and obstacles whose macroscopic representation have typically been done by ad-hoc boundary conditions. 
	
	The resulting transport equation is a generalization of ~\cite{GoeKlaTiw2015}. It is
	strictly hyperbolic as long as the density of individuals is below a critical density and becomes parabolic when the density exceeds that threshold. The parabolic part is then described by a density dependent diffusion coefficient. We introduced the operator splitting algorithm to solve the hybrid partial differential equation splitting the advective from the diffusive part of the equation and thus being able to handle the discontinuous change in type. 
	
	Applications are far reaching: Material flows on conveyor belts fit that model as an undisturbed conveyor belt will transport all parts on it with the same velocity without any relative motion. Interacting with deflectors or obstacles will push the parts together such that they collide and create relative movement that can best be described by diffusion \cite{GoeHohSch2014}.  We show that numerical solutions of the degenerate advection-diffusion model for conveyor belt flows around a deflector reproduce quantitatively the experimental results for similar situations. 
	
	In another application, the attraction repulsion model \cite{ChuDorMar2007} of self-propelling particles generates flock solutions where the forcing potential reaches a minimum and all particles move with the same velocity and a crystal-like fixed relative position. 
	Again, relative motion is introduced
	when the particles become close enough for collisions which macroscopically is depicted by exceedance of the critical density. The macroscopic model successfully replicated microscopic reflection laws and high-impact vs. low-impact scattering depending 
	on the relative strength of the potential forcing vs. the damping forces, reported in \cite{ArmMarTha2017}.
	

	\section*{Acknowledgments} The authors are grateful for the support of their joint research by the DAAD (Project-ID 57444394). J. Weissen and S. G\"ottlich are supported by the DFG project GO 1920/7-1. Further, the authors would like to thank Stephan Knapp for valuable discussions and  helpful suggestions which contributed to the emergence of the present work.

	\appendix
	\section{Simulation of the non-local model} \label{app:non-local}
	The space step for our simulations discussed in section \ref{sec:non-local} is $\dx^{(1)} =\dx^{(2)} = 10^{-2}$. We consider numerical results for equation~\eqref{eq:macrolimit_materialflow_totallynonlocal} computed by the finite volume Roe scheme, see~\cite{GoeHohSch2014}. We set $\rhomax = 1$, $\epsilon = 2v^{(1)}_T$ and the mollifier $\eta$ is
	\begin{align*}
		\eta(x) &= \frac{\sigma}{2 \pi } e^{-1/2 \sigma \norm{x}_2^2}, 
	\end{align*}
	with $\sigma = 10^4$. The CFL time step for the Roe scheme is $\Delta t = 4.7328 \cdot 10^{-4}$~\cite{RosWeiGoa2019}. The CFL step is relatively small due the approximation of the Heaviside function
	\begin{align} \label{eq:Heaviside_atan}
		H_{tan}(u) &= \frac{\arctan(50 (u-\rhomax))}{\pi} + \frac{1}{2},
	\end{align}
	in the numerical flux function of the Roe scheme. Sharper approximations would further strengthen the time step size.
	
	For comparability, we compute the advection-diffusion equation once with the Heaviside approximation~\eqref{eq:Heaviside_atan} and once with the sharper approximation~\eqref{eq:Heaviside_xi}.
	We fix $C:= \bar{C} / \gamma_b$ in the diffusion to $C=2 v^{(1)}_T$. We treat the deflector as internal boundary $\partial \Omega \subset \Omega$ in the advection-diffusion equation. We use the operator splitting method~\eqref{eq:Upwind}-\eqref{eq:implicitscheme} with the boundary conditions at the deflector from Section~\ref{sec:boundaryconditions}. The CFL time step~\eqref{eq:CFL_operatorsplitting} is $\Delta t = 7.32 \cdot 10^{-2}$. Note that the application of the operator splitting allows to compute with larger time steps because the implicit method is used to compute the parabolic part. However, in comparison to the explicit Roe scheme, a system of nonlinear equations has to be solved in each iteration.

	\section{Simulation of swarms} \label{app:swarm}
	\begin{algorithm}[h]
		\caption{Numerical simulation of a swarm in a bounded domain}\label{alg:swarming_boundary}
		\begin{algorithmic}[1]
			\REQUIRE Domain with boundary $\Omega \cup \partial \Omega$,  initial conditions $\rho_0, \overline{v}^0$ for $x \in \Omega \cup \partial \Omega$, diffusion coefficient $C$ and critical density $\rhocrit$, step sizes $\dx^{(1)}, \dx^{(2)}, \Delta t$ 
			\ENSURE Densities vectors $\rho^s=(\rho_{ij}^s)_{i \in \{1,\dots, N_{x^{(1)}} \}, j \in \{1,\dots, N_{x^{(2)}} \}}$
			\STATE Set $s=0, t^s = 0$
			\WHILE {$t^s <T$}
			\STATE Set $\Delta t$ according to the CFL condition~\eqref{eq:CFL_operatorsplitting} and set $s=s+1, t^s = t^{s-1} + \Delta t$
			\FOR {$i = 1,\dots, N_{x^{(1)}}$} 
			\FOR { $j = 1,\dots, N_{x^{(2)}}$}
			\STATE Compute the velocity $v_{ij}^s$~\eqref{eq:calcvelocity}.
			\STATE If $x_{i+1j}, x_{ij+1}, x_{i-1j}$ or $x_{ij-1} \in \partial \Omega$ and $\langle v_{ij}^{s}, \vec{n} \rangle > 0$, apply specular reflection
			$$
			v_{ij,new} = v_{ij}^s - 2 \langle v_{ij}^s , \vec{n} \rangle \vec{n},
			$$
			where $\vec{n} $ is the outer normal vector at the boundary and update $v_{ij}^s = v_{ij,new}$.
			\ENDFOR
			\ENDFOR
			\STATE Compute the new average velocity $\bar{v}^s$ of the swarm~\eqref{eq:calcmeanvelocity} and compute the solution $\rho(x,t^s)$ to
			\begin{align*} \partial_t \rho + \nabla_x \cdot \left(\rho  \overline{v}^s - \rho C \nabla \rho H(\rho - \rhocrit)) \right) &= 0 &\qquad x \in \Omega \\
				\left(\rho \overline{v}^s - \rho C \nabla \rho H(\rho - \rhocrit) \right) \cdot \vec{n} &= 0 &\qquad x \in \partial \Omega \\
				\rho(x,t^{s-1}) &= \rho_{ij}^{s-1} &\qquad x \in C_{ij}
			\end{align*}
			with the operator splitting method~\eqref{eq:Upwind}-\eqref{eq:implicitscheme}.
			\ENDWHILE
		\end{algorithmic}
	\end{algorithm}

	To compute an approximate solution to~\eqref{eq:macroscopicswarming_singleswarm}, we discretize with step sizes $\dx^{(1)}, \dx^{(2)}, \Delta t$ and have to iteratively determine the velocity $\bar{v}^s, s =1, \dots N_t$. For a given location $x_{ij} \in \Omega$ and fixed time $t^s = s \Delta t$, we determine the velocity 
	
	\begin{align}
		v_{ij}^{s} &= \left( \overline{v}^{s-1} -  C \nabla \rho_{ij}^{s-1} \, H(\rho_{ij}^{s-1} - \rhocrit) \right), \label{eq:calcvelocity} 
	\end{align}
	and compute the new mean velocity as the weighted average 
	\begin{align}
		\bar{v}^s &= \sum_{(i,j)} \frac{\rho_{ij}^{s-1}}{\sum_{(i,j)} \rho_{ij}^{s-1}} v_{ij}^s, \qquad \bar{v}^s = \bar{v}^s \frac{\norm{\bar{v}^0}}{\norm{\bar{v}^s}},  \label{eq:calcmeanvelocity} 
	\end{align}
	which is normalized such that $\norm{\bar{v}^s} = \norm{v^0}$. For inner grid cells, we determine $\nabla \rho_{ij}^{s-1}$ with central differences. To determine the velocity  $v^s_{ij}$ in a cell $(i,j)$ at the boundary, we approximate the gradient $\nabla \rho_{ij}$ using neighbouring cells in transport direction as follows
	
	\begin{align*}
		\nabla \rho_{ij}^{s-1} \approx \begin{cases}
			\frac{\rho_{i+1j}^{s-1}  - \rho_{ij}^{s-1}}{\Delta x^{(1)}} &\text{ if } (i+1,j) \in B \text{ and } \bar{v}^{(1),s-1} \geq 0, \\
			\frac{\rho_{ij}^{s-1} - \rho_{i-1j}^{s-1}}{\Delta x^{(1)}} &\text{ if } (i+1,j) \in B\text{ and } \bar{v}^{(1),s-1}<0.
		\end{cases}
	\end{align*}
	
	The gradient is approximated using forward differences if the flock is moving towards the boundary, i.e., $\bar{v}^{(1),s-1} \geq 0$, to reflect the swarm from the boundary. If instead of approaching the wall, the flock moves away from the boundary, i.e., $v^{(1),s-1} <0$, the gradient is approximated with the backward difference. When the flock approaches the boundary, the flock solution breaks apart due to the boundary influence. If $ \langle v_{ij}^s , \vec{n} \rangle > 0$, we assume that the velocity is reflected specularly for an individual cell. This is in line with the microscopic treatment~\cite{ArmMarTha2017} where single individuals are reflected specularly at the boundary. In particular, the macroscopic velocity~\eqref{eq:calcvelocity} is updated as follows
	\begin{equation*}
		v_{ij,new}^{s} = v_{ij}^s - 2 \langle v_{ij}^s, \vec{n}  \rangle \vec{n},
	\end{equation*}
	before calculating the new average velocity. 
	
	To evaluate whether the density is at the level of the critical density, we use $H_{\xi, \rhocrit - \xi}(\rho_{ij}^{s-1})$, such that $H_{\xi, \rhocrit- \xi}(\rhocrit) = 1$. Below the maximum density, the transport velocity $\overline{v}^s$ is reflected specularly. When the critical density $\rho_{ij}^{s-1}=\rhocrit$ is reached in the cell $(i,j)$, the velocity in $x^{(1)}$-direction is reflected.  If $v_{ij,new}^{(1),s}< 0$, the velocity for this particular cell changes its sign. If the sign of the velocity changes in a sufficiently large number of cells, the new mean velocity $\bar{v}^{(1),s}$ is smaller than zero and the entire swarm will
	have negative mean velocity $\overline{v}^{(1),s}$ in $x^{(1)}$-direction and move away from the boundary in the next time step.
	For each experiment, we choose the end of the time horizon $T$ such that the change in the reflection angle after the wall collision is small, i.e., we choose $N_t$ such that $\theta^{r}(t^{Nt}) - \theta^{r}(t^{Nt-1}) < 10^{-3}$.


	\bibliographystyle{aims}
	\bibliography{Literature}
\end{document}